\pgfplotsset{compat=1.18} 
\newcommand{\dynsys}{\ensuremath{\mathcal{D}}\xspace}
\newcommand{\ints}{\ensuremath{\mathbb{Z}}\xspace}
\newcommand{\nats}{\ensuremath{\mathbb{N}}\xspace}
\newcommand{\reals}{\ensuremath{\mathbb{R}}\xspace}
\newcommand{\ReLU}{\ensuremath{\mathsf{ReLU}}\xspace}
\newcommand{\traj}{\ensuremath{\tau}\xspace}
\newcommand{\nextF}{\ensuremath{\mathit{next}}\xspace}
\newcommand{\noise}{E\xspace}
\newcommand{\init}{I\xspace}
\newcommand{\trans}{\mathbf{F}\xspace}
\newcommand{\ctrl}{u\xspace}
\newcommand{\timestep}{\ensuremath{\delta}\xspace}
\newcommand{\horizon}{\ensuremath{T}\xspace}
\newcommand{\reach}{G\xspace}
\newcommand{\avoid}{A\xspace}
\newcommand{\node}{\ensuremath{\mathcal{N}}\xspace}
\newcommand{\poly}{\ensuremath{\mathcal{P}}\xspace}
\newcommand{\simp}{\ensuremath{\mathcal{S}}\xspace}
\newcommand{\svert}{\ensuremath{\mathbf{vert}}\xspace}
\newcommand{\polyenc}{\ensuremath{\mathcal{E}}\xspace}
\newcommand{\vset}{\ensuremath{\mathcal{B}}\xspace}
\newcommand{\pset}{\ensuremath{P}\xspace}
\newcommand{\stateset}{\ensuremath{\mathcal{X}}\xspace}
\newcommand{\State}{\stateset}
\newcommand{\nextState}{\ensuremath{\mathcal{X}_{t+1}}\xspace}
\newcommand{\hnextState}{\ensuremath{\hat{\mathcal{X}}_{t+1}}\xspace}
\newcommand{\model}{\ensuremath{\mathcal{M}}\xspace}
\newcommand{\scomp}{\ensuremath{\Delta}\xspace}
\newcommand{\cell}{\ensuremath{\mathcal{C}}\xspace}
\newcommand{\conv}{\ensuremath{\mathbf{conv}}\xspace}
\newcommand{\dom}{\ensuremath{\mathit{dom}}\xspace}
\newcommand{\powset}[1]{\ensuremath{2^{#1}}\xspace}
\newcommand{\uni}{\ensuremath{\mathcal{U}}\xspace}
\newcommand{\overt}{\ensuremath{\mathit{Overt}}\xspace}
\newcommand{\compose}{\ensuremath{\mathit{Compose}}\xspace}
\newcommand{\unit}[1]{\ensuremath{\mathbf{e}_{#1}}\xspace}
\newcommand{\vleq}{\ensuremath{\preccurlyeq}\xspace}
\newcommand{\vgeq}{\ensuremath{\succcurlyeq}\xspace}
\newcommand{\pend}{\ensuremath{\mathcal{P}}\xspace}
\newcommand{\acc}{\ensuremath{\mathcal{A}}\xspace}
\newcommand{\tora}{\ensuremath{\mathcal{T}}\xspace}
\newcommand{\point}{\ensuremath{\vec{q}}\xspace}
\newcommand{\state}{\ensuremath{\vec{x}}\xspace}
\newcommand{\interp}{\ensuremath{+~\!\!\!\!{\text{\tiny$_\Delta$}}\,}\xspace}
\newcommand{\interpi}{\ensuremath{+~\!\!\!\!{\text{\tiny$_{\Delta_i}$}}\,}\xspace}
\newcommand{\lift}[3]{\ensuremath{#1\!\!\uparrow\!\!^{#2}_{#3}}\xspace}
\newcommand{\fg}{\ensuremath{\mathit{fg}}\xspace}
\newcommand{\cells}{\ensuremath{\mathit{Cells}}\xspace}
\newcommand{\pl}{\ensuremath{\vec{p}^{\:l}}}
\newcommand{\pu}{\ensuremath{\vec{p}^{\:u}}}
\newcommand{\bound}{\ensuremath{\mathrm{Bound}}\xspace}
\definecolor{inputblue}{HTML}{4A90D9}
\definecolor{boundgreen}{HTML}{5BA870}
\definecolor{milporange}{HTML}{E8913A}
\definecolor{optpurple}{HTML}{8B6DB5}
\definecolor{reachred}{HTML}{D9534F}
\definecolor{safeteal}{HTML}{2A9D8F}
\definecolor{lightgray}{HTML}{F5F5F5}
\definecolor{bordergray}{HTML}{CCCCCC}
\newcommand{\modded}[1]{{}}
\begin{document}

\title{Polyhedral Enclosures: An Efficient Combinatorial Abstraction for Nonlinear Neural Feedback Systems}

\author{\name I. Samuel Akinwande \email samakin@stanford.edu \\
       \addr Department of Aerospace, Aeronautics and Astronautics\\
       Stanford University\\
       Stanford, CA 94305, USA
       \AND
       \name Chlesea Sidrane \email chelse@kth.se \\
       \addr  Division of Robotics, Perception and Learning \\
       KTH Royal Institute of Technology\\
       Stockholm, Sweden
       \AND
       \name Mykel J. Kochenderfer \email mykel@stanford.edu \\
         \addr Department of Aerospace, Aeronautics and Astronautics\\
       Stanford University\\
       Stanford, CA 94305, USA
       \AND
       \name Clark Barrett \email barrettc@stanford.edu \\
       \addr Department of Computer Science\\
        Stanford University\\
       Stanford, CA 94305, USA}

\editor{My editor}

\maketitle

\begin{abstract}
As dynamical systems controlled by neural networks become increasingly prevalent, it is critical to ensure their safe operation.  
Although efficient techniques exist to handle neural systems with \emph{linear} transition functions, few scalable methods address the
\emph{nonlinear} case. 
We propose a novel algorithm for verifying nonlinear neural feedback systems using forward reachability analysis. Our algorithm leverages the structure of the nonlinear transition functions to compute tight linear abstractions which we call polyhedral enclosures.  These are then encoded as mixed-integer linear programs (MILPs) and solved to yield a sound over-approximation of the forward-reachable set. We evaluate our algorithm on representative benchmarks and demonstrate significant improvements over the previous state of the art.
\end{abstract}

\begin{keywords}
  keyword one, keyword two, keyword three
\end{keywords}

\section{Introduction}
\label{sec:intro}
The success of neural networks in the fields of drone racing~\citep{kaufmann2023champion}, autonomous driving~\citep{ettinger2021large}, and system identification~\citep{duong2024port} illustrates the growing interest in using neural networks as controllers for dynamical systems. We refer to these as \emph{neural feedback systems}, and many of their potential uses are in safety-critical settings. In order to realize this potential, it is crucial to develop verification techniques to ensure their safety.
Although there is a rich body of work on verification for classical control systems~\citep{tomlin2003computational,chen2013flow,bansal2017hamilton}, 
these techniques are often ill-suited for neural feedback systems due to nonlinearities and the common practice of viewing neural networks as black boxes. To address this gap,  verification approaches based on reachability analysis have been developed~\citep{ivanov2020verifying,everett2021reachability,sidrane2022overt,wang2023polar,zhang2023reachability,kochdumper2023constrained}. 
These methods can be broadly classified into two categories: propagation-based methods and combinatorial methods~\citep{everett2021neural}, 
each offering distinct advantages and disadvantages.

Following the terminology of~\citep{everett2021neural}, we define 
\emph{propagation-based methods} as those that systematically propagate an initial set through both the dynamical system and the neural network. These methods often rely on abstractions, such as Taylor models~\citep{dutta2019reachability,huang2022polar}, Bernstein polynomials~\citep{fan2020reachnn}, zonotopes~\citep{SchillingFG22}, and polynomial zonotopes~\citep{kochdumper2023open}, to achieve efficient computations. To address the loss of precision caused by these abstractions, some tools introduce refinement algorithms that can improve the tightness of the analysis~\citep{ladner2023automatic,10156051,9296363}.
CORA is a prominent tool for verifying neural feedback systems using abstraction propagation \citep{althoff2016cora}. 
It computes reachable sets by combining approximations of the system dynamics with non-convex network abstractions (e.g., polynomial zonotopes). 

For the state space, CORA employs conservative linear~\citep{althoff2008reachability} and polynomial~\citep{althoff2013reachability} abstractions. To handle the neural network, it uses reachability analysis~\citep{kochdumper2020sparse,kochdumper2023open} to compute a non-convex enclosure of the network’s outputs. By leveraging these abstraction-based methods alongside non-convex set representations, CORA trades off precision for computational efficiency making it
one of the most competitive tools for formally verifying neural feedback systems. Although other propagation-based tools exist, many either lack support for discrete-time nonlinear neural feedback systems~\citep{bogomolov2019juliareach} or incorporate CORA’s core algorithm~\citep{lopez2023nnv}.

In contrast, \emph{combinatorial methods} verify properties by solving combinatorial problems. Existing approaches include techniques that model neural feedback systems as hybrid systems~\citep{ivanov2019verisig} or as hybrid zonotopes~\citep{siefert2023successor,zhang2023reachability}, 
as well as methods that encode the problem as marching trees~\citep{vincent2021reachable} or integer linear programs~\citep{sidrane2022overt}. By emphasizing the combinatorial structure of the neural network controller, these methods trade off computational complexity for precision. 
A notable example is the OVERTVerify algorithm~\citep{sidrane2022overt}, which computes reachable sets for nonlinear neural feedback systems. OVERTVerify uses the OVERT algorithm and relational overapproximations to construct constraints that implicitly bound multivariate nonlinear functions. 

While alternative combinatorial tools exist for ReLU networks with nonlinear dynamics 
(e.g., those employing hybrid zonotopes~\citep{siefert2023successor}), 
OVERTVerify has thus far remained one of the most scalable combinatorial approaches for capturing reachability, due to its ability to efficiently handle larger systems while preserving high precision in its reachability analysis.

Combinatorial methods, including OVERTVerify, are often prohibitively expensive computationally, and thus, many successful verification tools instead rely on abstraction propagation and refinement~\citep{lopez2023arch}.

In this paper, we introduce the OvertPoly algorithm, which aims to address some of the limitations of combinatorial methods. We exploit a structured representation of the neural feedback system to create a precise combinatorial method with computational performance comparable to propagation-based methods. 

\modded{Modified this section to provide a better comparison}
Our bounding approach is based on the OVERT algorithm~\citep{sidrane2022overt}, and is closely related to the notion of \emph{convex envelopes}~\citep{rikun1997convex}. A convex envelope is the tightest convex relaxation of a nonlinear function over a given domain and plays a central role in nonlinear programming. Substantial effort has been devoted to constructing such envelopes. For example, \citet{tawarmalani2013explicit} identify subdivisions of hyperrectangles over which convex envelopes can be constructed for a class of supermodular functions, while \citet{nagarajan2019adaptive} develop an adaptive partitioning algorithm for constructing convex envelopes of polynomial functions.
The univariate polyhedral bounds produced by the OVERT algorithm can be interpreted as instances of convex envelopes. However, our work departs from the classical convex envelope literature in two key ways. First, we focus on the composition of such enclosures, which is essential for analyzing neural feedback systems. Second, our framework applies to a broader class of nonlinear functions, whereas existing convex envelope constructions are often restricted to specific function classes, most notably polynomials.
Our contributions are listed below:
\begin{itemize}
    \item We introduce \emph{polyhedral enclosures}, a novel combinatorial abstraction for multivariate nonlinear functions. The abstraction combines algebraic decomposition with optimization-based bounding algorithms to provide arbitrarily tight bounds on nonlinear functions.
    \item We provide an efficient method for encoding polyhedral enclosures as mixed integer linear programs (MILPs).
    Our encoding addresses the limitations of existing approaches by exploiting problem structure to help minimize the encoding size.
    \item We define a novel algorithm for forward reachability and discuss several optimizations.
    \item We implement and evaluate our algorithm and show that it performs better than both OVERTVerify and CORA on a set of neural feedback system benchmarks.
\end{itemize}
The rest of the paper is organized as follows.  \Cref{sec:Back} covers background material, \cref{sec:Prob} defines the problem of neural feedback system verification, and \cref{sec:PolyEnc} introduces the theory of polyhedral enclosures.  Then, in \cref{sec:alg}, we define the OvertPoly algorithm and discuss its implementation.  \Cref{sec:eval} reports on our experimental evaluation, and \cref{sec:conclusion} summarizes our conclusions.

\section{Background}
\label{sec:Back}
\modded{Pipeline figure and NFS figure}
We rely on concepts from the theory of polyhedra. The following sections introduce notation, and help describe the necessary background. 
\begin{figure}
    \centering
    \resizebox{0.9\textwidth}{!}{\begin{tikzpicture}[
    >=Stealth,
    node distance=0.6cm and 1.1cm,
    stage/.style={
        rectangle, rounded corners=3pt,
        draw=#1!70!black, line width=0.7pt,
        fill=#1!8,
        minimum height=1.8cm, minimum width=2.6cm,
        align=center,
        font=\small\sffamily,
        inner sep=5pt,
    },
    stagelabel/.style={
        font=\footnotesize\sffamily\bfseries,
        text=#1!80!black,
        anchor=south,
    },
    detail/.style={
        font=\scriptsize\sffamily,
        text=black!70,
        align=center,
    },
    pipe/.style={
        -{Stealth[length=5pt, width=4pt]},
        line width=0.8pt,
        color=black!50,
    },
    feedback/.style={
        -{Stealth[length=5pt, width=4pt]},
        line width=0.7pt,
        color=black!40,
        dashed,
    },
    annot/.style={
        font=\tiny\sffamily,
        text=black!55,
    },
]

\node[rectangle, rounded corners=3pt,
      draw=inputblue!70!black, line width=0.7pt,
      fill=inputblue!4,
      minimum height=2.5cm, minimum width=3.2cm,
      inner sep=5pt, dashed] (nfs) {};
\node[font=\footnotesize, text=inputblue!80!black, anchor=north east,
      inner sep=2pt] at (nfs.north east) {$X_t$};
\node[stagelabel=inputblue, above=3pt of nfs.north] {System $\mathcal{D}$};

\begin{scope}[shift={($(nfs.center)+(-0.715, -0.715)$)}, scale=0.55]
    \filldraw[thick, rounded corners=0.2cm, fill=inputblue!18, draw=inputblue!60,
              rotate around={-45:(1.3,1.3)}]
              ({1.3 - 0.38}, {1.3 - 0.95}) rectangle ({1.3 + 0.38}, {1.3 + 0.95});
    \fill[inputblue!70] (1.3,1.3) circle (0.06cm);
    \draw[black!65, thick, ->] ($(1.3,1.3)+(45:1.15)$) -- ++(45:0.65) node[right, font=\tiny] {$v$};
    \draw[black!50, thin, ->] ($(1.3,1.3)+(120:1.2)$) arc (120:180:0.45)
          node[pos=0.5, left, font=\tiny] {$\omega$};
    \draw[black!50, thin, ->] ($(1.3,1.3)+(-10:1.2)$) -- ++(0:0.6) node[above, font=\tiny] {$x$};
    \draw[black!50, thin, ->] ($(1.3,1.3)+(80:1.2)$) -- ++(90:0.6) node[right, font=\tiny] {$y$};
\end{scope}

\node[stage=boundgreen, right=1.3cm of nfs] (bound) {
    \textbf{Polyhedral}\\[1pt]
    \textbf{Enclosures}
};
\node[detail, below=0.5pt of bound.south, anchor=north] (bound-detail) {
   
};

\node[stage=milporange, right=2.0cm of bound] (milp) {
    \textbf{MILP}\\[1pt]
    \textbf{Encoding}
};
\node[detail, below=0.5pt of milp.south, anchor=north] (milp-detail) {
    $M_0 \,{\cup}\, \{M_1, \ldots, M_n\}$
};

\node[stage=reachred, below right=1.2cm and 0.6cm of milp] (reach) {
    \textbf{Reachable Set}\\[1pt]
    $\hat{X}_{t+1}$
};
\node[detail, below left=0.3cm and 0.5cm of reach.south, anchor=north] (reach-detail) {
    $\hat{X}_{t+1}$
};

\node[detail, below left=0.3cm and 6.5cm of reach.south, anchor=north] (reach-detail) {
    $\hat{X}_t \gets \hat{X}_{t + 1}$
};

\draw[pipe] (nfs.east) -- (bound.west);
\draw[pipe] (bound.east) -- (milp.west);
\draw[pipe] (milp.east) -| (reach.north);
\node[annot, right=3pt] at ($(reach.north)+(0, 0.5)$) {solve};

\node[annot, below=1pt] at ($(nfs.east)!0.5!(bound.west)$) {$F, X_t$};
\begin{scope}[shift={($(nfs.east)!0.5!(bound.west)+(0, 0.5)$)}, scale=0.35]
    \draw[black!25, thin] (-1.0, 0.1) .. controls (-0.6, 0.7) and (-0.2, -0.2) .. (0, 0.15)
                          .. controls (0.2, 0.5) and (0.6, 0.6) .. (1.0, 0.25);
    \draw[black!50, semithick] (-1.0, -0.3) .. controls (-0.6, 0.35) and (-0.2, -0.55) .. (0, -0.15)
                          .. controls (0.2, 0.2) and (0.6, 0.3) .. (1.0, -0.05);
    \draw[black!20, very thin] (-1.0, 0.1) -- (-1.0, -0.3);
    \draw[black!20, very thin] (1.0, 0.25) -- (1.0, -0.05);
    \fill[inputblue!6, opacity=0.5]
        (-1.0, 0.1) .. controls (-0.6, 0.7) and (-0.2, -0.2) .. (0, 0.15)
                     .. controls (0.2, 0.5) and (0.6, 0.6) .. (1.0, 0.25)
        -- (1.0, -0.05) .. controls (0.6, 0.3) and (0.2, 0.2) .. (0, -0.15)
                         .. controls (-0.2, -0.55) and (-0.6, 0.35) .. (-1.0, -0.3) -- cycle;
\end{scope}
\node[annot, above=18pt] at ($(nfs.east)!0.5!(bound.west)$) {$f_i \in F$};
\node[annot, below=1pt] at ($(bound.east)!0.5!(milp.west)$) {$B_t$};
\begin{scope}[shift={($(bound.east)!0.5!(milp.west)+(0, 0.55)$)}, scale=0.55]
    \draw[reachred!40, thin] (-1.0, 0.3) -- (-0.3, 0.65) -- (0.2, 0.15) -- (0.6, 0.7) -- (1.0, 0.4);
    \draw[reachred!65, thin] (-1.0, 0.1) -- (-0.3, 0.45) -- (0.2, -0.05) -- (0.6, 0.5) -- (1.0, 0.2);
    \draw[reachred!20, very thin] (-1.0, 0.3) -- (-1.0, 0.1);
    \draw[reachred!20, very thin] (1.0, 0.4) -- (1.0, 0.2);
    \fill[reachred!5, opacity=0.4]
        (-1.0, 0.3) -- (-0.3, 0.65) -- (0.2, 0.15) -- (0.6, 0.7) -- (1.0, 0.4)
        -- (1.0, 0.2) -- (0.6, 0.5) -- (0.2, -0.05) -- (-0.3, 0.45) -- (-1.0, 0.1) -- cycle;
    \draw[black!25, thin] (-1.0, 0.05) .. controls (-0.5, 0.5) and (0.0, -0.25) .. (0.5, 0.25)
                          .. controls (0.7, 0.4) and (0.9, 0.15) .. (1.0, 0.1);
    \draw[black!50, semithick] (-1.0, -0.15) .. controls (-0.5, 0.3) and (0.0, -0.45) .. (0.5, 0.05)
                          .. controls (0.7, 0.2) and (0.9, -0.05) .. (1.0, -0.1);
    \draw[black!15, very thin] (-1.0, 0.05) -- (-1.0, -0.15);
    \draw[black!15, very thin] (1.0, 0.1) -- (1.0, -0.1);
    \fill[black!3, opacity=0.3]
        (-1.0, 0.05) .. controls (-0.5, 0.5) and (0.0, -0.25) .. (0.5, 0.25)
                      .. controls (0.7, 0.4) and (0.9, 0.15) .. (1.0, 0.1)
        -- (1.0, -0.1) .. controls (0.9, -0.05) and (0.7, 0.2) .. (0.5, 0.05)
                        .. controls (0.0, -0.45) and (-0.5, 0.3) .. (-1.0, -0.15) -- cycle;
    \draw[boundgreen!40, thin] (-1.0, -0.2) -- (-0.3, -0.0) -- (0.2, -0.4) -- (0.6, 0.0) -- (1.0, -0.15);
    \draw[boundgreen!70, thin] (-1.0, -0.4) -- (-0.3, -0.2) -- (0.2, -0.6) -- (0.6, -0.2) -- (1.0, -0.35);
    \draw[boundgreen!20, very thin] (-1.0, -0.2) -- (-1.0, -0.4);
    \draw[boundgreen!20, very thin] (1.0, -0.15) -- (1.0, -0.35);
    \fill[boundgreen!5, opacity=0.4]
        (-1.0, -0.2) -- (-0.3, -0.0) -- (0.2, -0.4) -- (0.6, 0.0) -- (1.0, -0.15)
        -- (1.0, -0.35) -- (0.6, -0.2) -- (0.2, -0.6) -- (-0.3, -0.2) -- (-1.0, -0.4) -- cycle;
\end{scope}

\node[rectangle, rounded corners=2pt, draw=inputblue!60, fill=inputblue!6,
      minimum height=0.7cm, minimum width=1.6cm, align=center,
      font=\scriptsize\sffamily, inner sep=3pt]
      (nn) at ($(milp.north)!0.65!(milp.north east) + (0, 1.2)$) {NN Controller $u$};
\coordinate (nn-land) at ($(milp.north)!0.65!(milp.north east)$);
\draw[pipe, inputblue!50] (nn.south) -- (nn-land);

\node[rectangle, rounded corners=2pt, draw=safeteal!70, fill=safeteal!8,
      minimum height=0.9cm, minimum width=2.0cm, align=center,
      font=\scriptsize\sffamily, inner sep=4pt,
      right=1.0cm of reach]
      (safe) {\textbf{Safety Check}};
\draw[pipe, safeteal!60] (reach.east) -- (safe.west);

\coordinate (fb-left) at ($(bound.south)+(0, -2.0)$);
\coordinate (fb-right) at ($(reach.south)+(0, -0.8)$);
\draw[feedback]
    (reach.south) -- (fb-right)
    -| (fb-left)
    -- (bound.south);


\end{tikzpicture}}
    \caption{\footnotesize The OvertPoly algorithmic pipeline: Given a neural feedback system $\mathcal{D}$, we compute polyhedral enclosures ($B_t$) for its nonlinear dynamics, encode them (along with the network) as mixed integer linear programs ($M_0,\ldots,M_n$), and solve the MILPs to compute reachable sets}
    \label{fig:overview}
\end{figure}
\subsection{Notation}
\label{ssec:notation}
We denote the set of integers as \ints, the set of natural numbers (integers greater than zero) as \nats, the set of real numbers as \reals, and 
the set of non-negative real numbers as $\reals_+$.
If \State is a set, we denote the \emph{power set} of \State (i.e., the set of all subsets of \State) as $ \powset{\State} $.  
We use $ [i.. j] $ to represent the set $ \{z\in\ints \mid i\le z \le j\} $,  $[i,j]$ to represent the set $\{r\in\reals \mid i\le r\le j\}$, and $[n]$ to abbreviate $[1..n]$. If $r \in \reals \setminus \ints$, we write use $\lceil r \rceil$ to denote $\operatorname{ceil}(r)$, and $\lfloor r \rfloor$ to denote the corresponding $\operatorname{floor}$ operator.

If $ S $ is any finite \emph{sequence} $ (s_1,\dots,s_n) $, we write $ |S| $ to
denote $ n $, the length of the sequence, and $ S_i $ to denote the $i^{\mathit{th}}$ element of the sequence.
We write $S_{[i..j]}$ for the sequence $(s_i,\dots,s_j)$  and $ S \circ S' $ for the sequence obtained by
appending the sequence $ S' $ to the end of $ S $.
If a sequence is used where a set is expected, the meaning is the set of elements in the sequence (e.g., $s\in S$ means that $s$ occurs in the sequence $S$).
We use the same notation for both vectors and sequences and treat them as interchangeable. 

We use \vleq and \vgeq to describe element-wise inequalities for finite sequences, defined as follows.  If $x$ and $y$ are sequences of size $n$, then so is $x \vleq y$, with $(x \vleq y)_i = 1$, if $x_i \le y_i$, and $0$ otherwise. We define $x \vgeq y$ similarly. We write $x \cdot y$ for the dot product
(i.e. sum of element-wise products)
of vectors $x$ and $y$.  
We write $ \vec{0} $ for the zero vector, $ \vec{1} $ for the vector of all ones, and $ \unit{i} $ for the $ i^{\mathit{th}} $ unit vector. The sizes of these vectors will be left implicit when it is clear from context. A \emph{convex combination vector} $\vec{\theta}$ is a vector whose entries are non-negative and sum to 1, i.e., $\vec{\theta} \cdot \vec{1} = 1$ and $ \vec{\theta} \vgeq \vec{0} = \vec{1}$.

For a function $ f:\stateset \to \reals$ and a set $\stateset' \subseteq \stateset$, we define the \emph{image} of $\stateset'$ under $f$ as
  $f(\stateset') := \{f(x) \mid x \in \stateset'\}$.  Similarly, if $S$ is a sequence, then $f(S)$ is the sequence $(f(S_1),f(S_2),\dots)$.
For $\stateset' \subseteq \stateset$, we define the \emph{restriction} of $f$ to $\stateset'$ as the function $f^{\stateset'} : \stateset' \to \reals$ such that $f^{\stateset'}(x) = f(x)$ for every $x\in\stateset'$.

\subsection{Polyhedra}
\label{ssec:polyhedra}
Let \pset be a set of $k+1$ points: $\pset = \{\vec{p}_0,\dots,\vec{p}_k\}$, with each $\vec{p}_i\in\reals^n$. 
 The \emph{convex hull} of $\pset$ is
    \begin{equation}
        \conv(\pset) 
        = \{\vec{\theta}_0 \vec{p}_0 + \ldots + \vec{\theta}_k \vec{p}_k \mid \vec{\theta} \cdot \vec{1} = 1, \vec{\theta} \vgeq \vec{0} \}.
    \end{equation}
The points in $\pset$ are \emph{affinely independent} iff the set $\{\vec{p}_1-\vec{p}_0,\dots,\vec{p}_k-\vec{p}_0\}$ is linearly independent.  The \emph{polyhedron formed by} a set of points $\pset$ is just $\conv(\pset)$.  A subset of $\reals^n$ is a polyhedron if it is the convex hull of a finite set of points in $\reals^n$.%
\footnote{Sometimes the term ``polyhedron'' is reserved for three-dimensional objects, with the generalization to arbitrary dimensions called a ``polytope.'' We use ``polyhedron'' also for the general case \citep{boyd2004convex,ziegler2012lectures}.}

\begin{definition}[$ k $-Simplex]
    A polyhedron $\simp$ is a $k$-simplex if it is the convex hull of $k+1$ affinely independent points.  A polyhedron is a simplex if it is a $k$-simplex for some $k$, and $k$ is called its \emph{dimension}.
\end{definition}

\noindent
If $\simp$ is a $k$-simplex, let $ \svert(\simp)$, the \emph{vertices} of \simp, denote the (unique) set of $k+1$ points $\pset$ such that $\simp=\conv(\pset)$. A \emph{face} of $\simp$ is the convex hull of any non-empty subset of $\svert(\simp)$.

\begin{definition}[Simplicial $k$-Complex]
A \emph{simplicial complex} $\scomp$ is a set of simplices such that:
\begin{itemize}
\item Every face of a simplex in \scomp is also in \scomp
\item Every non-empty intersection of two simplices $\simp_1,\simp_2\in\scomp$ is a face of both $\simp_1$ and $\simp_2$
\end{itemize}
\end{definition}

\noindent
\scomp is a \emph{pure simplicial $k$-complex} if the largest dimension of any simplex in \scomp is $k$ (also called the \emph{dimension} of \scomp) and if every simplex in \scomp of dimension less than $k$ is a face of some simplex in \scomp of dimension $k$.

\begin{definition}[Full-Dimensional]
Let \pset be a finite set of points in $\reals^n$.  We say that \pset is \emph{full-dimensional} if it contains $n+1$ affinely independent points.
\end{definition}

\begin{definition}[Point Set Triangulation]
If \pset is a finite, full-dimensional set of points in $\reals^n$, then a pure simplicial $n$-complex \scomp is a \emph{triangulation of $\pset$} if $\pset = \bigcup_{\simp\in\scomp} \svert(\simp)$ and $\conv(\pset) = \bigcup_{\simp\in\scomp} \simp$.
\end{definition}

Let $C$ be a closed $n$-ball.  We use $C^O$ to denote the corresponding open $n$-ball and $C^S$ to denote the hypersphere that forms the surface of $C$.  The vertices---with respect to a set \pset of points---of $C$ are defined as $V_\pset(C) = C \cap \pset$.  For a polyhedron $\poly$, the \emph{circumsphere} of $\poly$ (when it exists) is a hypersphere that touches all of the vertices of $\poly$.  Note that circumspheres always exist for simplices and for hyperrectangles.

Now, let $P$ be a finite, full-dimensional set of points in $\reals^n$, let $\scomp$ be a triangulation of $\pset$, and let $\simp$ be an $n$-simplex in \scomp.  We define $C(\simp)$ to be the $n$-ball whose boundary is the circumsphere of $\simp$.  $\simp$ satisfies the \emph{Delaunay condition} and is called a \emph{Delaunay simplex of $\pset$} if $V_\pset(C(\simp)^O) = \emptyset$ (i.e., the only points from $\pset$ contained in $C(\simp)$ are on its surface).  \scomp is a \emph{Delaunay triangulation} if every $n$-simplex in \scomp satisfies the Delaunay condition.  A set of points in $\reals^n$ has a Delaunay triangulation of dimension $n$ iff it is full-dimensional.

For a point $\vec{x}\in\conv(P)$, let $\simp_{\Delta}(\vec{x})$ be the $n$-simplex in $\Delta$ containing $\vec{x}$ (if $\vec{x}$ is in more than one $n$-simplex, which can only occur when it is on a face, we assume $\simp_{\Delta}(\vec{x})$ chooses one of the simplices in a deterministic way).  We define $\theta_{\Delta}(\vec{x})$ to be the convex combination vector such that $\vec{x} = \theta_{\Delta}(\vec{x}) \cdot \svert(\simp_{\Delta}(\vec{x}))$.
\begin{definition}[Grid]
A \emph{grid of dimension $n$} is the Cartesian product of $n$ finite subsets of $\reals$, each containing at least two elements. 
Given a grid $G$, we define $G_i$ as the projection of $G$ onto dimension $i$, so that $G = G_1 \times \dots \times G_n$. The \emph{domain} of the grid is defined as $\dom(G) = \conv(G).$

A \emph{grid cell} of $G$ is a subset of $G$ whose convex hull is an $n$-dimensional hyperrectangle $R$, such that no points of $G$ other than those forming the vertices of $R$ are contained in $R$.  For $\state\in G$, $\cells(G,\state)$ denotes the set of all grid cells containing $\state$, i.e., $\cells(G,\state) = \{\State \mid \State \text{ is a grid cell of } G \text{ and } \state \in \State\}.$
\end{definition}

\subsection{Bounding Convex Functions}
\label{sec:overt}
To compute tight upper and lower bounds for nonlinear one-dimensional functions, we use a method introduced in the OVERT algorithm~\citep{sidrane2022overt}. 
For a function $f(\state)$ that is convex on the interval $(a,b)$, an upper bound can be constructed by symbolically partitioning $(a,b)$ into $m>1$, subintervals with endpoints $(s_0 = a, s_1, \ldots, s_{m -1}, s_m = b)$ and defining secant lines from $(s_{i-1},f(s_{i-1}))$ to $(s_{i},f(s_{i}))$ for $i\in[m]$.  The point locations $s_i $ are then optimized to minimize the area between the secant lines and the function $ f(\state)$.
A lower bound can be constructed similarly by partitioning $(a,b)$ into subintervals with endpoints $s_i$, $i\in[0..m]$ and defining a sequence of line segments from $(s_{i},t_{i})$ to $(s_{i+1},t_{i+1})$, $i\in[0..m-1]$ in such a way that $t_0=f(a), t_m=f(b)$, each segment from $(s_i,t_i)$ $(s_{i+1},t_{i+1})$, $i\in[m-2]$ is tangent to $f(\state)$, and the area between the line segments and the function is once again minimized.
Bounds for a univariate function $ f(\state) $ that is concave over an interval are computed analogously, using a series of secants for lower bounds and a series of tangents for upper bounds. The tightness of the bounds can be adjusted by modifying the parameter $ m $.  Bounds over an arbitrary interval can be computed by first decomposing the function into intervals of uniform convexity and then composing the bounds obtained for each region.  Details of these methods are provided by \citet{sidrane2022overt}.

\subsection{ReLU Functions as Mixed Integer Constraints}
\label{sec:MipVerify}
A \emph{feed-forward neural network} (FNN) is a function constructed by composing linear and nonlinear operations. These operations are organized in \emph{layers}, where each layer applies a linear transformation and (optionally) a nonlinear operation. The elements within a layer are referred to as \emph{neurons}, and the nonlinear operations are known as \emph{activation functions}. In this work, we focus on the Rectified Linear Unit (\ReLU) activation function, defined as $\ReLU(x) = \max(x,0)$. We refer to FNNs using only this activation function as \emph{\ReLU-activated feed-forward neural networks}.

\ReLU activated FNNs specify piecewise-linear functions and can therefore be formulated as mixed-integer linear programs (MILPs). Following the technique used by the MIPVerify algorithm~\citep{tjeng2017evaluating}, we encode \ReLU constraints as follows.  Let
$l,u \in \mathbb{R}$ be real numbers intended to bound $x$, with $l \leq 0 \leq u$,\footnote{Note that in the case that $l\leq u < 0$ or $0 < l \leq u$, a much simpler linear encoding is possible.}
and let
$z$ be a binary variable. Then both the \ReLU function
$y = \max(x,0)$ and the bounds $l \leq x \leq u$ can be represented by the constraints
\begin{gather}
    y \le x - l(1 - z), \quad
    y \ge x, \quad
    y \le u \cdot z, \quad
    x \in \reals, \quad y \in \reals_+, \quad z \in \{0,1\}.
    \label{eq:mipverify}
\end{gather}
To encode a \ReLU-activated FNN, the linear layers are expressed as standard linear constraints, while the nonlinear \ReLU constraints use~\Cref{eq:mipverify}.

\section{Neural Feedback Systems}
\begin{figure}
    \centering
    \resizebox{0.4\textwidth}{!}{\begin{tikzpicture}[>=latex, text height=1.5ex, text depth=0.25ex]
    \node (plant) [draw, rectangle, minimum width=1.5cm, minimum height=1cm, label=above:{System Dynamics}] {$f(x)$};
    \node (controller) [draw, below=1.5cm of plant, rectangle, minimum width=1.5cm, minimum height=1cm, label=below:{NN Controller}] {$\pi(y)$};
    \node (input) [draw, left= 2cm of plant, circle]{$+$};
    \node (noise) [above=0.5cm of input]{$\epsilon$};
    \node (observer) [draw, below right=0.25cm and 1.5cm of plant, rectangle, minimum width=1.5cm, minimum height=1cm, label=left:{Sensor}] {$o(x)$};

    \draw[->] (noise) -- (input);
    \draw[->] (controller) -| (input)node[pos=0.3, yshift=7pt]{$u$};
    \draw[->] (input) -- (plant);
    \draw[->] (plant) -| (observer)node[pos=0.3, yshift=7pt]{$x$};; 
    \draw[->] (observer) |- (controller) node[midway, xshift=7pt, yshift=7pt]{$y$};
\end{tikzpicture}}
    \caption{Closed loop depiction of a neural feedback system}
    \label{fig:nfs}
\end{figure}
\label{sec:Prob}
We define a discrete-time neural feedback system \dynsys as the tuple $ \langle n, \init, \trans, \noise,\! \ctrl, \timestep, \horizon, \reach, \avoid\rangle$, 
where $ n\in\nats $ is the \emph{dimension} of the system (i.e., every state of the system is an element of $ \reals^n $ ), $ \init\subseteq\reals^n $ is the set of \emph{initial states}, $ \trans $ is a sequence $ (f_1,\dots,f_n) $ of \emph{state update functions} with $ f_i : \reals^n\to\reals $ , $ \noise\subseteq\reals^n $ is a perturbation error set (i.e., a set from which an error term may be introduced when computing the next state), $ \ctrl:\reals^n\to\reals^n $ is the \emph{control function}, $ \timestep\in\reals_+ $ is the \emph{time step size}, $ \horizon\in\nats $ is the \emph{number of time steps}, $ \reach\subseteq\reals^n $ is the set of \emph{goal states}, and $ \avoid:[0..T] \to 2^{\reals^n} $ is a function from time steps to the set of \emph{avoid states} (i.e., unsafe states) at that time step.

States evolve over a sequence of \horizon 
discrete time steps, each of duration \timestep, 
and the \emph{time horizon} is $\timestep \cdot \horizon$.  If $ \state\in\reals^n $ is a system state, the next-state function $ \nextF^{\dynsys} : \reals^n \to 2^{\reals^n} $ defines the set of possible next states (it is a set because of the nondeterminism introduced by the error term) as follows.  For each $ i\in[1..n] $ ,
\begin{align}
    \nextF^{\dynsys}(\state) = \{ \state + \trans(\state) + \ctrl(\state) + \vec{\epsilon}\big)\cdot \delta \mid \epsilon \in \noise\}
    \label{eq:diffEq}
\end{align}
We assume that each $ f_i \in \trans$ is from the class of Lipschitz continuous multivariate functions composed of rational operations (i.e., $+, -, \times, \div$) on univariate elementary functions.\footnote{Following the convention of~\citep{muller2006elementary}, we define elementary functions as the trigonometric functions, their inverses, the exponential functions, and logarithmic functions.}
We call these functions \emph{extended rational nonlinear functions}.\footnote{While we only consider a subset of nonlinear functions, the Kolmogorov-Arnold representation theorem~\citep{kuurkova1991kolmogorov} suggests that our approach can be generalized to arbitrary nonlinear functions.}
We also assume that the neural network controller $\ctrl$ is a multilayer perceptron with $ n $ inputs, $ n $ outputs, and ReLU activations.

A trajectory $ \traj^{\dynsys}(\State_0)$, where $ \State_0\subseteq\init$, is the sequence of sets of states $ (\State_0, \dots, \State_T)$, where $ \State_{i} = \nextF^{\dynsys}(\State_{i-1}) $ for $ i\in[1..T] $.
A system \dynsys is \emph{safe} if it satisfies the following reach-avoid properties:
\begin{align}
    &\forall\, \state_0\in\init.\: \exists \, t \in [0..T].\: \traj^{\dynsys}(\{\state_0\})_t \subseteq \reach , \label{eq:reach_prop}\\
    &\forall \, t \in [0..T].\: \traj^{\dynsys}(\init)_t \cap \avoid(t) = \emptyset. \label{eq:avoid_prop}
\end{align} 
Property \ref{eq:reach_prop} is a \textit{reach} property.  It states that every trajectory starting from some state in the initial state set reaches the goal set (specified by $\reach$) within the specified time horizon.  On the other hand, Property \ref{eq:avoid_prop} is an \textit{avoid} property, requiring that the system avoids any unsafe states ($\avoid$) at each time within the given time horizon.
An illustrative example can be found 
below.
\subsection{Illustrative Example: Unicycle Car Model}\label{ssec:uni}
As a running example, we use a discrete-time version of the unicycle car model example from the 2023 ARCH competition~\citep{lopez2023arch}. 
The car is modeled with four variables, representing the $ x $ and $ y $ coordinates in a plane, the steering angle ($\omega$), and the velocity magnitude $v$.
Formally, we define $ \uni = \langle n^{\uni}, \init^{\uni}, \trans^{\uni}, \noise^{\uni}, \ctrl^{\uni}, \timestep^{\uni}, \horizon^{\uni}, \reach^{\uni}, \avoid^{\uni}\rangle$, where $ n^{\uni}=4$, $ \init^{\uni} = [9.5,9.55] \times [-4.5,-4.45] \times [2.1,2.11] \times [1.5,1.51]$, $ \trans^{\uni}(\state) = (\state_4\cos(\state_3), \state_4\sin(\state_3),0,0)$, $ \noise = \{0\}\times\{0\}\times\{0\}\times [-10^{-4},10^{-4}]$, $ \ctrl^{\uni} $ 
is computed by a neural network with one hidden layer with 500 neurons and four outputs, the first two of which are set to the constant zero value (i.e., the controller only affects the velocity and steering), $ \timestep^{\uni} = 0.2 $ , $ \horizon^{\uni} = 50 $, $\reach^{\uni} =[-0.6,0.6] \times [-0.2,0.2] \times [-0.06,0.06] \times [-0.3,0.3]$, and $\avoid^{\uni}(t) = \emptyset$ for every $t \in [0..\horizon^{\uni}]$.  
 At each step, the system updates its $ x $ and $ y $ coordinates based on the steering and velocity control outputs, where the velocity ou   In our verification approach, we use \emph{polyhedral enclosures} to provide tight overapproximations of nonlinear functions.  In this section, we provide a formal definition of a polyhedral enclosure and then discuss how to construct and compose them.

\section{Polyhedral Enclosures}
\modded{Introduced illustrating steps of the theory}
\label{sec:PolyEnc}
In our verification approach, we use \emph{polyhedral enclosures} to provide tight overapproximations of nonlinear functions.  In this section, we provide a formal definition of a polyhedral enclosure and then discuss how to construct and compose them.

\begin{definition}[Bounding Set] A \emph{bounding set} is a tuple $\vset = \langle n,\pset, L, U \rangle$, where $n \in \nats$, $\pset$ is a finite, full-dimensional set of points in $\reals^n$, and $L$ and $U$ are functions from \pset to $\reals$, such that $ L(\vec{p}) \leq U(\vec{p}) $ for all $ \vec{p} \in \pset. $ The \emph{domain} of \vset is defined as $\dom(\vset)=\conv(\pset)$.
\end{definition}
\begin{example}(Bounding Set)
  Let $\vec{p}_1=(-5,-5)$, $\vec{p}_2=(-5,5)$, $\vec{p}_3=(5,-5)$, $\vec{p}_4=(5,5)$ be points in $\reals^2$
  and let $\pset_1=\{\vec{p}_1, \vec{p}_2, \vec{p}_3, \vec{p}_4\}$.  Then, let $\vset^1$ be the bounding set $\langle 2, \pset_1,L_1,U_1 \rangle$, with $U_1(\vec{p}) = 5$ and $L_1(\vec{p}) = -5$
  for every $\vec{p} \in \pset_1$. The domain of $\vset^1$,  $\dom(\vset^1)$, is a
  square with sides of length $10$ centered at the origin.
  \label{ex:boundingset}
\end{example}

\begin{definition}[Polyhedron formed by Bounding Set]
Let $ \vset = \langle n,\pset,L,U \rangle $ be a bounding set. We define the \emph{vertices} of the bounding set as:
\[
V(\vset) := \{(\vec{p},L(\vec{p})) : \vec{p} \in \pset\} \cup \{(\vec{p},U(\vec{p})) : \vec{p} \in \pset\}.
\]
We define the ($n+1$-dimensional) \emph{polyhedron formed by $ \vset $ } as
\[
\poly(\vset) := \conv(V(\vset)).
\]
\end{definition}
\begin{figure}
    \centering
    \resizebox{0.9\textwidth}{!}{\input{figures/poly_enc_theory}}
    \caption{\footnotesize Visualizing the evolution of a polyhedral enclosure from a bounding set $\to$ a polyhedron $\to$ a polyhedral enclosure. }
    \label{fig:petheory}
\end{figure}
The polyhedron formed by $\vset^1$ from~\Cref{ex:boundingset} is a cube with
sides of length 10 centered at the origin.

\begin{definition}[Polyhedral Enclosure]
Let $ \vset = \langle n,\pset, L, U \rangle $ be a bounding set, let $\Delta$ be a Delaunay triangulation of $\pset$, and let $\Delta_n$ be the set of all $n$-simplices in $\Delta$. We define the bounding set associated with a simplex $\simp \in \Delta_n$ as:
\[
\vset_{\simp} := \langle n,\svert(\simp), L^{\svert(\simp)}, U^{\svert(\simp)} \rangle,
\]
\vspace{0.5em} 
\noindent
We define the \emph{polyhedral enclosure} formed by $\vset$ and $\Delta$ as:
\[
\polyenc(\vset,\Delta) := \bigcup_{S\in\Delta_n} \poly(\vset_S).
\]
\end{definition}
\begin{example}(Polyhedral Enclosure)
  Consider again the bounding set $\vset_1$ from~\Cref{ex:boundingset}.  Let
  $\Delta_1$ be the triangulation of $\pset_1$ whose 2-simplices are the
  triangles $T_1$, with vertices $\vec{p}_1, \vec{p}_2, \vec{p}_3$, and $T_2$, with vertices
  $\vec{p}_2, \vec{p}_3, \vec{p}_4$.  Then, $\poly(\vset^1_{T_1})$ and $\poly(\vset^1_{T_2})$ are
  the two prisms obtained by slicing the polyhedron formed by $\vset^1$ in half
  along the plane $x_1 + x_2 =0$.  And $\polyenc(\vset^1,\Delta_1)$ is their
  union, which, in this case, is again just the polyhedron formed by $\vset^1$.
\end{example}

\begin{definition}[Function Enclosure]
Let $ \vset=\langle n,\pset,L,U\rangle$ be a bounding set.  A function $f : D \to \reals$, where $\dom(\vset) \subseteq D \subseteq \reals^n$, is \emph{enclosed} by $\vset$ if for every $\state\in\dom(\vset)$ and every Delaunay triangulation $\Delta$ of $\pset$, $(\state,f(\state)) \in 
\polyenc(\vset,\Delta)$.
\label{def:funcEnc}
\end{definition}
\begin{example}(Function Enclosure)
Let $f(x_1,x_2)=x_2\cos(x_1)$. The maximum value of $f$ on
$\dom(\vset^1)$ is $5$ and the minimum value is -5. $\polyenc(\vset^1,\Delta)$ is the origin-centered cube with side length $10$ for every Delaunay point set triangulation $\Delta$ of $\pset_1$.  Thus, $\vset^1$ encloses $f$.
\label{ex:funcenc}
\end{example}
\begin{figure}
    \centering
    \resizebox{0.6\textwidth}{!}{\begin{tikzpicture}[
    >=stealth,
    every node/.style={font=\small},
    stepbox/.style={
        draw, rounded corners=3pt, minimum width=2.6cm, minimum height=0.7cm,
        fill=black!6, thin, align=center, font=\footnotesize
    },
    endbox/.style={
        draw, rounded corners=3pt, minimum width=2.6cm, minimum height=0.7cm,
        fill=black!15, thin, align=center, font=\footnotesize
    },
    arr/.style={-{Stealth[length=3pt]}, thin, black!50},
    colhead/.style={font=\small\bfseries, align=center},
]

\def\colA{0}
\def\colB{9.5}
\def\mid{4.75}

\node[stepbox, minimum width=4cm] (func) at (\mid, 0)
    {$f(x_3, x_4) = x_4 \cos(x_3)$};
\node[above=2pt, font=\scriptsize, text=black!50] at (func.north)
    {multivariate nonlinear function};

\node[stepbox, minimum width=4cm] (decomp) at (\mid, -1.3)
    {$g_1(x_3) = \cos(x_3)$, \quad $g_2(x_4) = x_4$};
\node[below=0pt, font=\scriptsize, text=black!50] at (decomp.south)
    {decompose};

\draw[arr] (func) -- (decomp);

\draw[arr] (decomp.south) -- ++(0, -0.35) -| (\colA, -2.4);
\draw[arr] (decomp.south) -- ++(0, -0.35) -| (\colB, -2.4);

\node[colhead] at (\colA, -2.7) {OVERTVerify};
\node[colhead] at (\colB, -2.7) {OvertPoly};

\draw[thin, black!15, dashed] (\mid, -2.2) -- (\mid, -14.0);


\begin{scope}[shift={(\colA - 2.2, -4.5)}]
    \draw[thin, ->] (0, 0) -- (2.0, 0) node[right, font=\tiny] {$x_3$};
    \draw[thin, ->] (0, 0) -- (0, 1.5);
    \draw[thick] (0.1, 0.75) -- (0.6, 1.25) -- (1.2, 1.3) -- (1.7, 0.9);
    \draw[thick, dashed] (0.1, 0.5) -- (0.6, 0.95) -- (1.2, 1.0) -- (1.7, 0.6);
    \fill[black!10]
        (0.1, 0.5) -- (0.6, 0.95) -- (1.2, 1.0) -- (1.7, 0.6) --
        (1.7, 0.9) -- (1.2, 1.3) -- (0.6, 1.25) -- (0.1, 0.75) -- cycle;
    \node[font=\tiny] at (0.9, -0.3) {$B_{g_1}$};
\end{scope}

\begin{scope}[shift={(\colA + 0.5, -4.5)}]
    \draw[thin, ->] (0, 0) -- (2.0, 0) node[right, font=\tiny] {$x_4$};
    \draw[thin, ->] (0, 0) -- (0, 1.5);
    \draw[thick] (0.1, 0.25) -- (1.7, 1.25);
    \draw[thick, dashed] (0.1, 0.1) -- (1.7, 0.95);
    \fill[black!10]
        (0.1, 0.1) -- (1.7, 0.95) --
        (1.7, 1.25) -- (0.1, 0.25) -- cycle;
    \node[font=\tiny] at (0.9, -0.3) {$B_{g_2}$};
\end{scope}

\node[stepbox, minimum width=3.0cm] (relbox) at (\colA, -6.8)
    {$y = y_1 \cdot y_2$\\[1pt]
    {\scriptsize implicit relation}};

\draw[arr] (\colA, -5.3) -- (relbox.north);

\node[endbox] (milpA) at (\colA, -12.5)
    {MILP\\{\scriptsize (relational constraints)}};
\draw[arr] (relbox) -- (milpA);


\begin{scope}[shift={(\colB - 2.2, -4.5)}]
    \draw[thin, ->] (0, 0) -- (2.0, 0) node[right, font=\tiny] {$x_3$};
    \draw[thin, ->] (0, 0) -- (0, 1.5);
    \draw[thick] (0.1, 0.75) -- (0.6, 1.25) -- (1.2, 1.3) -- (1.7, 0.9);
    \draw[thick, dashed] (0.1, 0.5) -- (0.6, 0.95) -- (1.2, 1.0) -- (1.7, 0.6);
    \fill[black!10]
        (0.1, 0.5) -- (0.6, 0.95) -- (1.2, 1.0) -- (1.7, 0.6) --
        (1.7, 0.9) -- (1.2, 1.3) -- (0.6, 1.25) -- (0.1, 0.75) -- cycle;
    \node[font=\tiny] at (0.9, -0.3) {$B_{g_1}$};
\end{scope}

\begin{scope}[shift={(\colB + 0.5, -4.5)}]
    \draw[thin, ->] (0, 0) -- (2.0, 0) node[right, font=\tiny] {$x_4$};
    \draw[thin, ->] (0, 0) -- (0, 1.5);
    \draw[thick] (0.1, 0.25) -- (1.7, 1.25);
    \draw[thick, dashed] (0.1, 0.1) -- (1.7, 0.95);
    \fill[black!10]
        (0.1, 0.1) -- (1.7, 0.95) --
        (1.7, 1.25) -- (0.1, 0.25) -- cycle;
    \node[font=\tiny] at (0.9, -0.3) {$B_{g_2}$};
\end{scope}

\node[stepbox, minimum width=2.8cm, font=\scriptsize] (liftbox) at (\colB, -6.5)
    {Lift to $\mathbb{R}^2$, interpolate};
\draw[arr] (\colB, -5.3) -- (liftbox.north);

\newcommand{\iso}[3]{({#1 + 0.35*#2}, {#3 + 0.25*#2})}

\begin{scope}[shift={(\colB - 1.8, -10.0)}]
    \def\d{2.0}
    \def\xs{0.85}
    \def\zs{1.5}


    \draw[thin, black!35]
        \iso{0*\xs}{\d}{0.50*\zs} -- \iso{1*\xs}{\d}{1.35*\zs} --
        \iso{2*\xs}{\d}{1.40*\zs} -- \iso{3*\xs}{\d}{0.65*\zs};

    \fill[black!12]
        \iso{0*\xs}{0}{0.50*\zs} -- \iso{1*\xs}{0}{1.35*\zs} --
        \iso{2*\xs}{0}{1.40*\zs} -- \iso{3*\xs}{0}{0.65*\zs} --
        \iso{3*\xs}{\d}{0.65*\zs} -- \iso{2*\xs}{\d}{1.40*\zs} --
        \iso{1*\xs}{\d}{1.35*\zs} -- \iso{0*\xs}{\d}{0.50*\zs} -- cycle;

    \draw[thick]
        \iso{0*\xs}{0}{0.50*\zs} -- \iso{1*\xs}{0}{1.35*\zs} --
        \iso{2*\xs}{0}{1.40*\zs} -- \iso{3*\xs}{0}{0.65*\zs};

    \draw[thin] \iso{0*\xs}{0}{0.50*\zs} -- \iso{0*\xs}{\d}{0.50*\zs};
    \draw[thin] \iso{3*\xs}{0}{0.65*\zs} -- \iso{3*\xs}{\d}{0.65*\zs};

    \draw[thin, black!25] \iso{1*\xs}{0}{1.35*\zs} -- \iso{1*\xs}{\d}{1.35*\zs};
    \draw[thin, black!25] \iso{2*\xs}{0}{1.40*\zs} -- \iso{2*\xs}{\d}{1.40*\zs};

    \fill \iso{0*\xs}{0}{0.50*\zs} circle (1.5pt);
    \fill \iso{1*\xs}{0}{1.35*\zs} circle (1.5pt);
    \fill \iso{2*\xs}{0}{1.40*\zs} circle (1.5pt);
    \fill \iso{3*\xs}{0}{0.65*\zs} circle (1.5pt);

    \def\gz{-0.3}
    \draw[thin] ({0}, {\gz}) -- ({3*\xs}, {\gz});
    \draw[thin, black!30] ({0}, {\gz}) -- \iso{0}{\d}{\gz};
    \draw[thin, black!30] ({3*\xs}, {\gz}) -- \iso{3*\xs}{\d}{\gz};
    \draw[thin, black!30] \iso{0}{\d}{\gz} -- \iso{3*\xs}{\d}{\gz};

    \node[font=\scriptsize] at ({3*\xs + 0.3}, {\gz}) {$x_3$};
    \node[font=\scriptsize, above] at \iso{0}{\d + 0.15}{\gz} {$x_4$};

    \node[font=\scriptsize] at ({1.5*\xs + 0.35*\d/2}, {\gz - 0.5}) {$\mathcal{E}(B_f, \Delta)$};
\end{scope}

\draw[arr] (liftbox.south) -- (\colB, -7.7);

\node[endbox] (milpB) at (\colB, -12.5)
    {MILP\\{\scriptsize (polyhedral enclosure)}};
\draw[arr] (\colB, -11.5) -- (milpB.north);

\end{tikzpicture}}
    \caption{\footnotesize Comparing the OVERTVerify and OvertPoly algorithms. OVERTVerify prioritizes precision by encoding rational operations as constraints in an optimization problem while OvertPoly retains structure using the notion of bound composition}
    \label{fig:comp}
\end{figure}
\subsection{Bounding Sets for Univariate Functions}
\modded{Clarified separation between our work and chelsea's. Also added a figure}
\label{sec:univar}
Let $f:\reals \to \reals$ be a function that is twice differentiable on the interval $[a,b]$.
We construct a bounding set enclosing $f$ as follows.

We partition the interval into $m>1$ subintervals of uniform convexity by identifying all points in $[a,b]$ where the second derivative of $f$ vanishes:
\[
z_0 := a, \, z_m := b, \, \text{and } \frac{d^2}{dx^2}f(z_i)=0 \text{ for } i=1,\ldots,m-1.
\]
Note that if $ \frac{d^2}{dx^2}f(\state) = 0 $ for \emph{all} $ x \in [a,b] $ , then we set $ m=1 $.
For each subinterval $ [z_i, z_{i+1}]$, we use the OVERT algorithm \citep{sidrane2022overt}
to compute piecewise-linear upper and lower bounds for $f$ on that interval. Relevant details about the OVERT algorithm are provided
 in \cref{sec:overt}.
Stitching all of them together, we get piecewise-linear functions $L$ and $U$ such that $\forall\,x.\:a\le x\le b \implies L(x) \le f(x) \le U(x)$. 

Now, we define the point set $\pset$ to be the set of all endpoints of line segments in the piecewise-linear functions $L$ and $U$ on the interval $[a,b]$ (including $a$ and $b$ themselves).
A bounding set enclosing $f$ is then:
$\vset := \langle 1,\pset, L^{\pset}, U^{\pset} \rangle$. 
\textbf{We differ from OVERTVerify in how we generalize to multivariate functions.} OVERTVerify bounds multivariate functions using implicit relationships between variables, treating rational operations as constraints and thereby discarding polyhedral structure in higher dimensions. In contrast, our approach prioritizes preserving this polyhedral structure. We achieve this by exploiting the notion of composition, which we describe in the following section.
\subsection{Composing Bounding Sets}
\label{sec:compose}
In this section, we introduce a notion of \emph{composition} for bounding sets.  Our goal is to show that composition preserves enclosure.
We start by defining \emph{lifting} and \emph{interpolation}, two operations on bounding sets that are needed to define composition.
\emph{Lifting} extends a function to a higher dimensional domain in such a way that the restriction to the original domain is the original function.  Lifting is parameterized by a set $S$, which identifies the indices of the arguments in the higher-dimensional function that correspond to arguments in the original unlifted function (the rest of the higher-dimensional arguments are ignored).

\begin{definition}[Lifted Function] 
Let $f : \reals^k \to \reals$ be a function, let $n > k$, and let $S=\{j_1,\dots,j_k\}$ be a subset of $[n]$ with $j_i < j_{i+1}$ for each $i\in[k-1]$.  We define \emph{$f$ lifted to $n$ by $S$}, $\lift{f}{S,n}{}$ as follows.  For each $\vec{y}\in \reals^n$, $(\lift{f}{S,n}{})(\vec{y}) = f(\vec{x})$, where $\vec{x}\in\reals^k$ and for each $i\in[k]$, $x_i = y_{j_i}$.
\end{definition}

\begin{example}(Lifted Function)
  Let $f$ be the function from~\Cref{ex:funcenc}, and let $S = \{3,4\}$.
  Then, if $\vec{y}=(y_1,y_2,y_3,y_4)$, $\lift{f}{S,4}{}(y) = y_4\cos(y_3)$.
\end{example}

\begin{definition}[Lifted Bounding Set]
Let $\vset := \langle k, \pset, L, U \rangle$ be a bounding set, let $n > k$, let $S=\{j_1,\dots,j_k\}$ be a subset of $[n]$ with $j_i < j_{i+1}$ for each $i\in[k-1]$, and let $\pl,\pu\in\reals^n$, with $\pl_i < \pu_i$ for $i\in[n]\setminus S$.  We define \emph{$\vset$ lifted to $n$ by $S$ from $\pl$ to $\pu$}, $\lift{\vset}{S,n}{\pl,\pu}$ as follows.  $\lift{\vset}{S,n}{\pl,\pu} = \langle n, \pset', \lift{L}{S,n}{}, \lift{U}{S,n}{}\rangle$, where $\pset' = \{\vec{p}{\:}' \in \reals^n \mid \exists\,\vec{p}\in\pset.\:\vec{p}{\:}'_i = \vec{p}_i \text{ if } i \in S \text{ and } \vec{p}{\:}'_i \in \{\pl_i,\pu_i\} \text{ otherwise }$.
\end{definition}

\noindent
Note that the purpose of $\pl$ and $\pu$ is to provide lower and upper bounds for the new dimensions added when lifting the points in $\pset$, to ensure that $P'$ is full-dimensional.

\begin{example}(Lifted Bounding Sets)
Let $\vset^1$ be the bounding set from~\cref{ex:boundingset}, let $S =
\{1,2\}$, let $\pu =
(0,0,5)$, and let $\pl = (0,0,-5)$.  Then,
$\lift{\vset}{S,3}{\pl,\pu}$ is a bounding set whose domain is the
cube with sides of length 10 centered at the origin, and the polyhedron formed by the lifted bounding set
is a hypercube in $\reals^4$ also centered at the origin.
\end{example}
The theorem below states that lifting preserves function enclosure. 
\begin{theorem} 
    If a bounding set $\vset=\langle k,\pset,L,U\rangle$ encloses a function $f$, then every lifted bounding set encloses the corresponding lifted function.  Formally, for every $n>k$, $S \subset [n]$ with $|S|=k$, and $\pl,\pu\in\reals^n$, with $\pl_i < \pu_i$ for $i\in[n]\setminus S$, \lift{\vset}{S,n}{\pl,\pu} encloses \lift{f}{S,n}{}.
    \label{thm:lift}
\end{theorem}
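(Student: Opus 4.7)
The plan is to verify the enclosure property directly from~\Cref{def:funcEnc}. Fix $\vec{y} \in \dom(\lift{\vset}{S,n}{\pl,\pu})$ and a Delaunay triangulation $\Delta'$ of the lifted point set, and let $\vec{x}$ be the $S$-projection of $\vec{y}$, so that $(\lift{f}{S,n}{})(\vec{y}) = f(\vec{x})$. The goal is to show $(\vec{y}, f(\vec{x})) \in \polyenc(\lift{\vset}{S,n}{\pl,\pu}, \Delta')$.

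First, I would reduce the goal to a concrete interval condition. Choose any $n$-simplex $\simp' \in \Delta'$ containing $\vec{y}$, let $\vec{q}_1,\dots,\vec{q}_{n+1}$ be its vertices, and let $\vec{p}_i \in \pset$ be the $S$-projection of $\vec{q}_i$. If $\theta$ is the unique barycentric coordinate vector of $\vec{y}$ in $\simp'$, then $\vec{x} = \sum_i \theta_i \vec{p}_i$. Because the lifted $L$ and $U$ values at any vertex equal $L$ and $U$ evaluated at its $S$-projection, the vertex set of $(\lift{\vset}{S,n}{\pl,\pu})_{\simp'}$ is $\bigcup_i \{(\vec{q}_i, L(\vec{p}_i)), (\vec{q}_i, U(\vec{p}_i))\}$. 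A direct convex-combination calculation, leveraging uniqueness of barycentric coordinates in the $n$-simplex $\simp'$, shows that $(\vec{y}, z) \in \poly((\lift{\vset}{S,n}{\pl,\pu})_{\simp'})$ iff $z \in [\sum_i \theta_i L(\vec{p}_i),\, \sum_i \theta_i U(\vec{p}_i)]$. It thus suffices to prove $\sum_i \theta_i L(\vec{p}_i) \le f(\vec{x}) \le \sum_i \theta_i U(\vec{p}_i)$.

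Next, I would invoke a structural observation about Delaunay triangulations of lifted point sets: because the non-$S$ coordinates of points in $\pset'$ only take the two values $\pl_i$ or $\pu_i$ for $i \notin S$, every Delaunay $n$-simplex of $\pset'$ is thin in those dimensions, forcing the set of distinct projections $\{\vec{r}_j\} := \{\vec{p}_i : i \in [n+1]\}$ to be contained in $\svert(\simp)$ for some $k$-simplex $\simp$ of some Delaunay triangulation $\Delta$ of $\pset$ (with $\Delta$ chosen based on $\simp'$). Coalescing duplicates gives $\vec{x} = \sum_j \tilde\theta_j \vec{r}_j$ with $\tilde\theta_j := \sum_{i:\vec{p}_i = \vec{r}_j} \theta_i$; since $\vec{x} \in \simp$, uniqueness of barycentric coordinates in $\simp$ combined with the enclosure hypothesis for $f$ applied to $\Delta$ yields $f(\vec{x}) \in [\sum_j \tilde\theta_j L(\vec{r}_j),\, \sum_j \tilde\theta_j U(\vec{r}_j)]$, which equals the required interval, completing the proof.

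The hard part will be establishing the structural observation above. It reduces to the empty-circumsphere property for Delaunay triangulations: an $n$-simplex of $\pset'$ whose vertex projections spanned more than one $k$-cell of every Delaunay triangulation of $\pset$ would, after lifting, have a circumsphere necessarily containing an interior lifted point, violating the Delaunay condition. Cospherical (degenerate) configurations are handled by exploiting the quantification over \emph{every} Delaunay triangulation of $\pset$ in~\Cref{def:funcEnc}, which allows a suitable $\Delta$ to be selected per $\simp'$.
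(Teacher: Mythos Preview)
Your overall strategy matches the paper's: project $\vec y$ to its $S$-coordinates $\vec x$, note that the lifted $L,U$ at a vertex equal $L,U$ at its projection, and invoke the original enclosure. The paper's proof is much terser---it simply asserts that ``a similar restriction'' of the lifted inequality $L_y\nleq(\lift{f}{S,n}{})(\vec y)$ yields $L_{\vec x}\nleq f(\vec x)$, contradicting the hypothesis, without ever examining what projection does to a Delaunay simplex. Your write-up is more honest in isolating that step as the hard part.

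However, the structural observation you rely on is not true as stated. You claim the distinct projections $\{\vec r_j\}$ sit inside $\svert(\simp)$ for a single Delaunay $k$-simplex $\simp$ of $\pset$. Take $k=2$, $n=3$, $\pset$ the four corners of the unit square, and the added coordinate in $\{0,1\}$; then $P'$ is the vertex set of the unit cube, and the regular tetrahedron on $(1,0,0),(0,1,0),(0,0,1),(1,1,1)$ is a Delaunay $3$-simplex whose four projections are \emph{all} of $\pset$---one too many for any $2$-simplex. What the empty-circumsphere argument actually yields is weaker but still enough: since the $z_i$'s must affinely span $\reals^{n-k}$, the circumsphere center in the non-$S$ coordinates is forced to the box center, so every $(p,z)\in P'$ has the same $z$-distance to the center; hence the distinct $\vec r_j$ lie on a common $(k{-}1)$-sphere in $\reals^k$ whose open ball misses $\pset$. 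Consequently \emph{every} $k$-simplex on the $\vec r_j$ is Delaunay, and you can finish via an envelope argument: for the lower bound, triangulate $\{\vec r_j\}$ by the upper convex hull of $\{(\vec r_j,L(\vec r_j))\}$, so the piecewise-linear interpolant at $\vec x$ dominates $\sum_j\tilde\theta_j L(\vec r_j)$ and is itself $\le f(\vec x)$ by the enclosure hypothesis (dually for $U$, with a possibly different triangulation). The paper's proof does not confront this issue at all; your sketch is closer to a complete argument but needs this repair.
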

\begin{figure}
    \centering
    \resizebox{0.9\textwidth}{!}{\begin{tikzpicture}[
    every node/.style={font=\small},
    >=stealth,
]

\def\xscale{1.1}
\def\yscale{2.2}
\def\panelgap{5.2}

\newcommand{\iso}[3]{({#1 + 0.35*#2}, {#3 + 0.25*#2})}

\begin{scope}[shift={(0,0)}]
    \node[anchor=south, font=\small] at (1.5*\xscale, 1.65*\yscale) {Upper bound $\bar{f}$};
    \node[anchor=south east, font=\small\bfseries] at (-0.35*\xscale, 1.65*\yscale) {(a)};

    \draw[thin] (-0.3*\xscale, 0) -- (3.4*\xscale, 0);
    \foreach \p/\lab in {0/$p_1$, 1/$p_2$, 2/$p_3$, 3/$p_4$} {
        \draw[thin] (\p*\xscale, -0.06) -- (\p*\xscale, 0.06);
        \node[below, font=\footnotesize] at (\p*\xscale, -0.06) {\lab};
    }


    \draw[thick]
        (0*\xscale, 0.50*\yscale) -- (1*\xscale, 1.35*\yscale) --
        (2*\xscale, 1.40*\yscale) -- (3*\xscale, 0.65*\yscale);

    \fill (0*\xscale, 0.50*\yscale) circle (2.2pt);
    \fill (1*\xscale, 1.35*\yscale) circle (2.2pt);
    \fill (2*\xscale, 1.40*\yscale) circle (2.2pt);
    \fill (3*\xscale, 0.65*\yscale) circle (2.2pt);

    \node[font=\footnotesize] at (3.55*\xscale, -0.08) {$x_1$};
\end{scope}

\begin{scope}[shift={(\panelgap, 0)}]
    \node[anchor=south, font=\small] at (1.5*\xscale, 1.65*\yscale) {Lifted $\bar{f}\!\!\uparrow^{S,2}$};
    \node[anchor=south east, font=\small\bfseries] at (-0.35*\xscale, 1.65*\yscale) {(b)};

    \def\depth{2.2}
    \def\xs{1.0}
    \def\zs{2.0}


    \draw[thin, black!40]
        \iso{0*\xs}{\depth}{0.50*\zs} -- \iso{1*\xs}{\depth}{1.35*\zs} --
        \iso{2*\xs}{\depth}{1.40*\zs} -- \iso{3*\xs}{\depth}{0.65*\zs};

    \fill[black!12]
        \iso{0*\xs}{0}{0.50*\zs} -- \iso{1*\xs}{0}{1.35*\zs} --
        \iso{2*\xs}{0}{1.40*\zs} -- \iso{3*\xs}{0}{0.65*\zs} --
        \iso{3*\xs}{\depth}{0.65*\zs} -- \iso{2*\xs}{\depth}{1.40*\zs} --
        \iso{1*\xs}{\depth}{1.35*\zs} -- \iso{0*\xs}{\depth}{0.50*\zs} -- cycle;

    \draw[thick]
        \iso{0*\xs}{0}{0.50*\zs} -- \iso{1*\xs}{0}{1.35*\zs} --
        \iso{2*\xs}{0}{1.40*\zs} -- \iso{3*\xs}{0}{0.65*\zs};

    \draw[thin]
        \iso{0*\xs}{0}{0.50*\zs} -- \iso{0*\xs}{\depth}{0.50*\zs};
    \draw[thin]
        \iso{3*\xs}{0}{0.65*\zs} -- \iso{3*\xs}{\depth}{0.65*\zs};

    \draw[thin, black!30]
        \iso{1*\xs}{0}{1.35*\zs} -- \iso{1*\xs}{\depth}{1.35*\zs};
    \draw[thin, black!30]
        \iso{2*\xs}{0}{1.40*\zs} -- \iso{2*\xs}{\depth}{1.40*\zs};

    \fill \iso{0*\xs}{0}{0.50*\zs} circle (2.2pt);
    \fill \iso{1*\xs}{0}{1.35*\zs} circle (2.2pt);
    \fill \iso{2*\xs}{0}{1.40*\zs} circle (2.2pt);
    \fill \iso{3*\xs}{0}{0.65*\zs} circle (2.2pt);

    \def\gz{-0.6}
    \draw[thin, black!30]
        \iso{0*\xs}{\depth}{\gz} -- \iso{3*\xs}{\depth}{\gz};
    \draw[thin, black!30]
        \iso{3*\xs}{0}{\gz} -- \iso{3*\xs}{\depth}{\gz};
    \draw[thin] \iso{-0.2*\xs}{0}{\gz} -- \iso{3.2*\xs}{0}{\gz};
    \draw[thin] \iso{0*\xs}{0}{\gz} -- \iso{0*\xs}{\depth}{\gz};

    \foreach \p/\lab in {0/$p_1$, 1/$p_2$, 2/$p_3$, 3/$p_4$} {
        \draw[thin] ({(\p*\xs)}, {\gz - 0.08}) -- ({(\p*\xs)}, {\gz + 0.08});
        \node[below, font=\footnotesize] at ({(\p*\xs)}, {\gz - 0.1}) {\lab};
    }

    \node[font=\footnotesize] at ({3.4*\xs}, {\gz}) {$x_1$};
    \node[font=\footnotesize, above left=-1pt] at \iso{0*\xs}{\depth+0.2}{\gz} {$x_2$};

    \draw[thin, dashed, black!25]
        \iso{0*\xs}{0}{0.50*\zs} -- ({0*\xs}, {\gz});
    \draw[thin, dashed, black!25]
        \iso{1*\xs}{0}{1.35*\zs} -- ({1*\xs}, {\gz});
    \draw[thin, dashed, black!25]
        \iso{2*\xs}{0}{1.40*\zs} -- ({2*\xs}, {\gz});
    \draw[thin, dashed, black!25]
        \iso{3*\xs}{0}{0.65*\zs} -- ({3*\xs}, {\gz});
\end{scope}

\begin{scope}[shift={(2*\panelgap, 0)}]
    \node[anchor=south, font=\small] at (1.5*\xscale, 1.65*\yscale) {Upper bound $\bar{f}$};
    \node[anchor=south east, font=\small\bfseries] at (-0.35*\xscale, 1.65*\yscale) {(c)};

    \draw[thin] (-0.3*\xscale, 0) -- (3.4*\xscale, 0);
    \foreach \p/\lab in {0/$p_1$, 1.5/$p_2$, 3/$p_3$} {
        \draw[thin] (\p*\xscale, -0.06) -- (\p*\xscale, 0.06);
        \node[below, font=\footnotesize] at (\p*\xscale, -0.06) {\lab};
    }

    \draw[thick]
        (0*\xscale, 0.50*\yscale) -- (1.5*\xscale, 1.40*\yscale) -- (3*\xscale, 0.65*\yscale);

    \fill (0*\xscale, 0.50*\yscale) circle (2.2pt);
    \fill (1.5*\xscale, 1.40*\yscale) circle (2.2pt);
    \fill (3*\xscale, 0.65*\yscale) circle (2.2pt);
\end{scope}

\begin{scope}[shift={(3*\panelgap, 0)}]
    \node[anchor=south, font=\small] at (1.5*\xscale, 1.65*\yscale) {Interpolated $\bar{f} +_\Delta \vec{q}$};
    \node[anchor=south east, font=\small\bfseries] at (-0.35*\xscale, 1.65*\yscale) {(d)};

    \def\qx{0.75}
    \def\qU{0.95}

    \draw[thin] (-0.3*\xscale, 0) -- (3.4*\xscale, 0);
    \foreach \p/\lab in {0/$p_1$, 1.5/$p_2$, 3/$p_3$} {
        \draw[thin] (\p*\xscale, -0.06) -- (\p*\xscale, 0.06);
        \node[below, font=\footnotesize] at (\p*\xscale, -0.06) {\lab};
    }
    \draw[thin] (\qx*\xscale, -0.06) -- (\qx*\xscale, 0.06);
    \node[below, font=\footnotesize] at (\qx*\xscale, -0.06) {$\vec{q}$};

    \draw[thin, dashed, black!40]
        (0*\xscale, 0.50*\yscale) -- (1.5*\xscale, 1.40*\yscale);

    \draw[thick]
        (0*\xscale, 0.50*\yscale) -- (\qx*\xscale, \qU*\yscale) --
        (1.5*\xscale, 1.40*\yscale) -- (3*\xscale, 0.65*\yscale);

    \fill (0*\xscale, 0.50*\yscale) circle (2.2pt);
    \fill (1.5*\xscale, 1.40*\yscale) circle (2.2pt);
    \fill (3*\xscale, 0.65*\yscale) circle (2.2pt);

    \fill[black!60] (\qx*\xscale, \qU*\yscale) circle (2.2pt);
\end{scope}

\draw[->, thick, black!50] (3.6*\xscale, 0.75*\yscale) -- (\panelgap - 0.5*\xscale, 0.75*\yscale)
    node[midway, above, font=\scriptsize, text=black!50] {lift};

\draw[->, thick, black!50] (2*\panelgap + 3.6*\xscale, 0.75*\yscale) -- (3*\panelgap - 0.5*\xscale, 0.75*\yscale)
    node[midway, above, font=\scriptsize, text=black!50] {interpolate};

\end{tikzpicture}}
    \caption{\footnotesize Visualizing the lifting and interpolation operations}
    \label{fig:liftint}
\end{figure}
\begin{proof}
  Let $\lift{\vset}{S,n}{\pl,\pu} = \langle n, P', \lift{L}{S,n}{}, \lift{U}{S,n}{} \rangle$, and let $\vec{y}$ be a point in $ \dom(\lift{\vset}{S,n}{\pl,\pu})$.
 
  We need to prove that for any $\Delta$, $L_y\leq \lift{f}{S,n}{}(\vec{y}) \leq U_y$. We consider the validity of the lower bound, as the reasoning for the upper bound is symmetric. 

  Suppose a $\Delta$ exists where $L_y \nleq \lift{f}{S,n}{}(\vec{y})$. Define $\state$ to be the restriction of $\vec{y}$ to $\dom(\vset)$. A similar restriction of the previous inequality yields $L_{\state} \nleq f(\state)$. However, the theorem states that $\vset$ encloses $f$, which is true iff $L_{\state} \leq f(\state)$. This presents a contradiction. 
\end{proof}

Lifting is one way to extend a bounding set.  Another is to add vertices to the bounding set.  This does not work in general, but it does work for bounding sets whose point sets are grids.  We thus focus on such bounding sets from here on.

\begin{definition}[Grid Expansion]
Let $G$ be an $n$-dimensional grid, and let $\vec{q}\in\dom(G)$.  Let $G'_i$ be the set $G_i \cup \{q_i\}$.  Then $G$ expanded by $\vec{q}$, written $G + \vec{q}$, is the grid $G'_1 \times \dots \times G'_n$.
\end{definition}

\begin{definition}[Interpolated Function]
Let $\pset\subset\reals^n$ be a grid, $f : \pset \to \reals$ a function, and $\Delta$ a Delaunay triangulation of $\pset$.  Let $\vec{q}\in\dom(\pset)$, and let $P' = P + \vec{q}$.
Then, \emph{$f$ interpolated by $\vec{q}$ using $\Delta$}, written $f \interp \vec{q}$ is the function $f' : \pset'  \to \reals$ defined as:

\[(f \interp \vec{q})(\vec{x}) = \begin{cases}
    f(\vec{x}) \text{ if } \vec{x} \in P, \text{ and}\\
    \theta_{\Delta}(\vec{x}) \cdot f(\svert(\simp_{\Delta}(\vec{x}))) \text{ otherwise.}
    \end{cases}
\]
\end{definition}

\begin{example}(Interpolated Function)
    Let $\vset^1$ be the bounding set from previous examples, and let
  $\Delta$ be the triangulation of $\pset_1$ whose 2-simplices are the
  triangles $T_1$, with vertices $\vec{p}_1, \vec{p}_2, \vec{p}_3$, and $T_2$, with vertices
  $\vec{p}_2, \vec{p}_3, \vec{p}_4$.
  Let $\vec{q}=(0,0)$, and let $P'_1 = P_1 + \vec{q} = P_1 \cup \{(-5,0),(0,0),(5,0),(0,-5),(0,5)\}$.
Then, let $U'_1 = U_1 \interp \vec{q}$.  Let $\vec{x} = (-5,0)$.  To compute $U'_1(\vec{x})$, we first note that $\simp_{\Delta}(\vec{x}) = T_1$, $\svert(T_1) = \{\vec{p}_1,\vec{p}_2,\vec{p}_3\}$, and $\theta(\vec{x}) = (0.5,0.5,0)$.  Thus, $U'_1(\vec{x}) = \theta(\vec{x}) \cdot U_1(\svert(T_1)) = (0.5,0.5,0) \cdot (5,5,5) = 5$.
    \label{ex:interpfunc}
\end{example}

\begin{definition}[Interpolated Bounding Set]
Let $\vset = \langle n, \pset, L, U\rangle$ be a bounding set, with $\pset$ a grid, let $\Delta$ be a Delaunay triangulation of $\pset$, and let $\vec{q} \in \dom(\pset)$.  Then \emph{$\vset$ interpolated by $\vec{q}$ using $\Delta$}, written $\vset \interp \vec{q}$, is defined as:
\[\vset \interp \vec{q} = \langle n, \pset + \vec{q}, L\interp \vec{q}, U\interp\vec{q} \rangle.\]
\end{definition}
\begin{example}(Interpolated Bounding Set)
    Let $\vset^1$ be the bounding set from \cref{ex:boundingset}, $\Delta$ the triangulation from \cref{ex:interpfunc}, and $\vec{q} = (0,0)$. Then $\vset^1 \interp \vec{q} = \langle 2, \pset'_1, L_1\interp \vec{q}, U'_1 \rangle$, where $\pset'_1$ and $U'_1$ are as in \cref{ex:interpfunc}, and $L_1\interp \vec{q}$ is computed similarly..
\end{example}
\noindent
Before we can show that interpolation preserves enclosure, we need the following property of Delaunay triangulations for grids. 
\begin{lemma}
\label{lem:gridsimp}
Let $G$ be an $n$-dimensional grid, and let $\scomp$ be a Delaunay triangulation of $G$.  Then, for every $n$-simplex $\simp \in \scomp$, $C(\simp)$ is also the circumsphere of a grid cell of $G$.
\end{lemma}
\begin{proof}
This is a straightforward consequence of the duality between Delaunay triangulations and Voronoi Diagrams.  Notice that in a grid, the Voronoi cells are hyperrectangles whose vertices are the centers of the grid cells.  Thus, every circumsphere of an $n$-simplex in the Delaunay triangulation must have its center at the center of a grid cell.  Since the circumsphere cannot be larger than the circumsphere of the grid cell by the Delaunay condition and cannot be smaller as it would then touch no points in the grid, it must be exactly the circumsphere of the grid cell.
\end{proof}
A direct consequence of \Cref{lem:gridsimp} is that $G$ does not have a unique Delaunay triangulation. Since each $n-$simplex $\simp \in \scomp$ shares its circumsphere with a grid cell $\cell \in G$, then the other vertices of \cell can be combined with a subset of the vertices of \simp to form a Delaunay $n-$simplex, yielding a different triangulation. 

\begin{theorem}
Let $\vset = \langle n, \pset, L, U\rangle$ be a bounding set, with $\pset$ a grid.  If $\vset$ encloses a function $f$, then for every Delaunay triangulation $\Delta$ of $\pset$ and every point $\vec{q}\in\dom(\vset)$, $\vset \interp \vec{q}$ also encloses $f$.
\label{thm:intp}
\end{theorem}
\begin{proof}
    Let $\point$ be an arbitrary point in $\dom(\pset)$, let $\vset' = \vset \interp \vec{q} = \vset' = \langle n,P',L',U'\rangle$, and let $\Delta'$ be a Delaunay triangulation of $P'$. To show that $\vset'$ encloses $f$, we must show that $(\vec{x},f(\vec{x}))\in\polyenc(\vset',\Delta')$ for every $\state \in \dom(\vset')$.
    let $\simp = \simp_{\Delta'}(\vec{x})$, $V_s=\svert(\simp)$, and $\theta = \theta_{\Delta'}(\vec{x})$.
    
    Let $G'$ be the grid cell of $P'$ whose circumsphere coincides with that of $\simp$ (see \Cref{lem:gridsimp}). Let $L_{\state} = \theta \cdot L'(V_s)$, and $U_{\state} = \theta \cdot U'(V_s)$. We consider two cases.  
    
    First, if $G'$ is also a grid cell of $P$, then the simplex $\simp$ must also occur in a Delaunay triangulation of $P$. Recall that $\vset_\simp$ is the restriction of the bounding set $\vset$ to the simplex $\simp$, obtained by restricting $\pset$ to only the points in $V_s$. The restriction of $\vset'$ to $\simp$ is going to be just the same as the restriction of $\vset$ to $\simp$, that is, $\vset_\simp = \vset'_\simp$. 
    So, by substitution, $(\vec{x},f(\vec{x}))\in\poly(\vset_\simp)$ iff $(\vec{x},f(\vec{x}))\in\poly(\vset'_\simp)$.  But we know that $\vset$ encloses $f$, so we must have $(\vec{x},f(\vec{x}))\in\vset_\simp$.  Therefore, $(\vec{x},f(\vec{x}))\in\vset'_\simp$, and so $(\vec{x},f(\vec{x}))\in\polyenc(\vset',\Delta')$.
    
    The more interesting case is when $G'$ is not a grid cell of $P$.  Since $P'$ is a refined version of $P$, we know that $G' \subseteq \conv(G)$ for some grid cell $G$ of $P$.
    For notational convenience, let $I_{\vec{q}}$ be the set of points inserted into $P$ to obtain $P'$.
    
    We wish to show that $L_{\state} \leq f(\state) \leq U_{\state}$. 
    We include only the proof for the upper bound since the proof for the lower bound is symmetric.
    Let $\poly_{G} = \conv(U(G))$ be the polyhedron enclosed by the points in $U(G)$.
    
    Recall that the point $U_{\state}$ is defined as $U_{\state} = \theta \cdot U'(V_s)$. We aim to show that it is also a convex combination of $U(G)$, and thus in $\poly_G$.  First, note that for each $\vec{v}\in V_s \cap G$, we have that $U'(\vec{v})=U(\vec{v}) \in U(G)$.
    
    On the other hand, consider $\vec{v} \in V_s \setminus G$, meaning $\vec{v} \in I_{\vec{q}}$.  In this case, we need to look at the simplex containing $\vec{v}$ in $\Delta$, the triangulation used to interpolate $U$ to get $U'$.  Let $\hat{\simp} = \simp_{\Delta}(\vec{v})$, and let $V=\svert(\hat{\simp})$ be its vertices.  We know, by definition of interpolation, that $U'(\vec{v}) = \theta_{\Delta}(\vec{v})\cdot U(V)$.  Now, if $\vec{v}$ is in the interior of $\conv(G)$, then we must have $V\subseteq G$.  To see why, recall that (by \Cref{lem:gridsimp}) the circumsphere of $\hat{\simp}$ must also be the circumsphere of some grid cell of $P$.  If this grid cell is something other than $G$, then $\vec{v}$ would not be expressible as a convex combination of the points in the grid cell, of which $V$ is a subset, meaning that $\hat{\simp}$ must have the same circumsphere as $G$, and so $V\subseteq G$, and thus $U(V)\subseteq U(G)$
    
    Finally, suppose that $\vec{v}$ is on a face of $\conv(G)$.  In that case, we know that $\theta_{\Delta}(\vec{v})$ is non-zero only for dimensions corresponding to vertices of that face.  Thus, the equation $U'(\vec{v}) = \theta_{\Delta}(\vec{v})\cdot U(V)$ can be rewritten as $U'(\vec{v}) = \theta' \cdot U(V_F)$ for some convex combination vector $\theta'$ and for $V_F \subseteq G$.
    
    Thus, for each $\vec{v}\in V_s$, we have that $U'(\vec{v})$ is a convex combination of points in $U(G)$.  Thus, since $U_{\state}$ is a convex combination of points in $U'(V_s)$, we have that 
    $U_{\state}$ is a convex combination of points in $U(G)$, meaning $U(\state)\in\poly_G$.

    We now claim that $f(\state) \leq U^G_{\state}$, where $U^G_{\state}$ is the point on the lower face of $\poly_G$ at $\state$.
    To see this, note that the face must contain $U(\simp')$ for some $n$-simplex $\simp'$ containing $\state$ and whose vertices are a subset of $G$.  $\simp'$ satisfies the Delaunay condition, since its circumsphere is the same as that of $G$.  It must therefore also be part of some Delaunay triangulation of $P$, and thus, since $\vset$ encloses $f$, we have that $f(\vec{x}) \in \vset_{\simp'}$, and so $f(\state) \leq U^G_{\state} \leq U_{\state}$.
\end{proof}

We are now ready to consider composing bounding sets. 
Let $\vset^{f}=\langle n^f,\pset^f,L^f,U^f\rangle$ and $\vset^{g}=\langle n^g, \pset^g, L^g, U^g\rangle$ be bounding sets, and suppose that $\vset^{f}$ encloses $f$ and $\vset^{g}$ encloses $g$.  We wish to show that certain compositions of $\vset^{f}$ and $ \vset^{g}$ enclose corresponding compositions of $f$ and $g$.  Before composing, however, we use lifting and interpolation to obtain bounding sets with \emph{identical point sets} for (lifted versions of) $f$ and $g$.

The first step is to lift $f$ and $g$ to some dimension $n$.  We have to decide how we want to embed $\reals^{n^f}$ and $\reals^{n^g}$ into $\reals^n$.  We can choose any $n$ such that $\max(n^f,n^g)\leq n \leq n^f+n^g$.  The embedding is done via sets $S^f$ and $S^g$.  $S^f\subseteq [n]$ lists the positions taken by the arguments to $f$ in a full list of $n$ variables, and the same is true for $S^g$.
Then, $f' = \lift{f}{S^f,n}{}$ and $g' = \lift{g}{S^g,n}{}$ are lifted versions of $f$ and $g$, respectively, that both map from $\reals^n$ to $\reals$.  To lift the bounding sets, we define $\pl$, $\pu$ to be vectors of size $n$ such that $\pl_i = \min(\pset^f_i \cup \pset^g_i)$ and $\pu_i = \max(\pset^f_i \cup \pset^g_i)$, for $i\in[n]$.  Now, let $\vset^{f'} = \langle n,\pset^{f'},L^{f'},U^{f'}\rangle = \lift{\vset^{f}}{S^f,n}{\pl,\pu}$ and $\vset^{g'} = \langle n,\pset^{g'},L^{g'},U^{g'}\rangle = \lift{\vset^{g}}{S^g,n}{\pl,\pu}$.  We know that $\vset^{f'}$ and $\vset^{g'}$ enclose $f'$ and $g'$, respectively, by \Cref{thm:lift}.

At this point, we need $\dom(\vset^{f'}) = \dom(\vset^{g'})$.  Notice that because of the way we chose $\pl$ and $\pu$, this will be the case unless there is some $j_{i^f}\in S^f$ and $j_{i^g}\in S^g$ such that $j_{i^f}=j_{i^g}$ and $\min(\pset^f_{i^f}) \not= \min(\pset^g_{i^g})$ or $\max(\pset^f_{i^f}) \not= \max(\pset^g_{i^g})$.   To ensure this doesn't happen, we assume that each dimension has fixed lower and upper bounds.  In particular, for neural feedback systems, we assume that the initial state set $\init$ bounds each individual state variable.

Next, we need both bounding sets to use the same point set. We achieve this by defining $\vset^{f''}$ as follows.  Let $\{\vec{p_1},\dots,\vec{p_m}\}$ be an enumeration of the points in $\pset^{g'} \setminus \pset^{f'}$.  Define $\vset^{f''}_0 = \vset^{f'}$, and define $\vset^{f''}_i = \vset^{f''}_{i-1} \interpi \vec{p_i}$, for $i\in [m]$, where $\Delta_i$ is an arbitrary Delaunay triangulation of the point set of $\vset^{f''}_{i-1}$.  We then set $\vset^{f''} = \vset^{f''}_m$.  We define $\vset^{g''}$ similarly as the result of inserting into the point set of $\vset^{g'}$ any missing points from the point set of $\vset^{f''}$.  The resulting bounding sets, $\vset^{f''}$ and $\vset^{g''}$ are defined over the same point set and still enclose $f'$ and $g'$, respectively, by~\Cref{thm:intp}.
We can now define bounding set composition.
\begin{figure*}[t!]
    \center
    \resizebox{0.9\textwidth}{!}{
\begin{tikzpicture}
\node[anchor=north] (tree) at (-1cm, 1.5cm) {
    \begin{tikzpicture}[
        grow=down, 
        sibling distance=2cm, 
        level distance=1cm, 
        edge from parent/.style={draw, -stealth}, 
        every node/.style={circle, draw, minimum size=1.5cm} 
    ]
        \node {$\otimes$}
            child {node {$\mathbf{x}_4$}}
            child {node {$\cos(\mathbf{x}_3)$}};
    \end{tikzpicture}
};

\begin{axis}[
    name=plot1, 
    at={(2cm,0)}, 
    width=5cm, 
    height=4cm,
    xlabel={$x_3$},
    ylabel={$f(\mathbf{x}_3)$},
    legend style={at={(1.05,1.2)}, anchor=north west}, 
    legend cell align={left} 
]

\addplot[color=black, thick, domain=-1:1, samples=1000] {cos(deg(x))};
\addlegendentry{$f(\mathbf{x}_3) = \cos(\mathbf{x}_3)$}

\addplot[color=red, mark=o, mark size=1pt] coordinates {
 (-1.0, 0.5353023058681398)
 (-0.7162290640736887, 0.7349957172856684)
 (-0.5761808036482412, 0.8335495554233077)
 (-0.23267298160795075, 0.9609732264310585)
 (-0.18889649189051147, 0.9772120444891081)
 (0.18889649189051147, 0.9772120444891081)
 (0.23267298160795, 0.9609732264310588)
 (0.5761808036482412, 0.8335495554233077)
 (0.7162290640736885, 0.7349957172856686)
 (1.0, 0.5353023058681398)
};
\addlegendentry{Lower Bounds}

\addplot[color=blue, mark=o, mark size=1pt] coordinates {
 (-1.0, 0.5453023058681398)
 (-0.7162290640736887, 0.7840873150595451)
 (-0.5761808036482412, 0.8480683885144896)
 (-0.23267298160795075, 1.005)
 (-0.18889649189051147, 1.005)
 (0.18889649189051147, 1.005)
 (0.23267298160795, 1.005)
 (0.5761808036482412, 0.8480683885144897)
 (0.7162290640736885, 0.7840873150595453)
 (1.0, 0.5453023058681398)
};
\addlegendentry{Upper Bounds}
\end{axis}

\begin{axis}[
    name=plot2, 
    at={(2cm,-4cm)}, 
    width=5cm,
    height=4cm,
    xlabel={$x_4$},
    ylabel={$f(\mathbf{x}_4)$},
    legend style={at={(1.05,0.5)}, anchor=north west}, 
    legend cell align={left} 
]
\addplot[color=black, thick, domain=-1:1, samples=1000] {x};
\addlegendentry{$f(\mathbf{x}_4) = \mathbf{x}_4$}

\addplot[color=red, mark=o, mark size=1pt] coordinates {
(-1.0, -1.0500000000010001)
(1.0, 0.949999999999)
};
\addlegendentry{Lower Bounds}

\addplot[color=blue, mark=o, mark size=1pt] coordinates {
(-1.0, -0.949999999999)
(1.0, 1.0500000000010001)
};
\addlegendentry{Upper Bounds}

\end{axis}

\begin{axis}[
    name=surface,
    at={(10.5cm,-1.9cm)}, 
    width=7cm,
    view={30}{15}, 
    xlabel={$\mathbf{x}_3$},
    ylabel={$\mathbf{x}_4$},
    zlabel={$f(\mathbf{x})$},
    clip=false,
    legend style={
        at={(0.5,-0.25)}, 
        anchor=north, 
    }
]
\addplot3[
    surf, mesh/rows=2,
    color=red,
    opacity = 0.5
] coordinates
{
(-1.0, -1.0, -0.5403023058766738)
 (-0.7162290640736887, -1.0, -0.8963621083899191)
 (-0.5761808036482412, -1.0, -0.8566248877968725)
 (-0.23267298160795075, -1.0, -1.1020803207157925)
 (-0.18889649189051147, -1.0, -1.0533638665416554)
 (0.18889649189051147, -1.0, -1.0533638665416554)
 (0.23267298160795, -1.0, -1.1020803207157925)
 (0.5761808036482412, -1.0, -0.8566248877968725)
 (0.7162290640736885, -1.0, -0.8963621083899191)
 (1.0, -1.0, -0.5403023058766738)
 (-1.0, 1.0, 0.5403023058595977)
 (-0.7162290640736887, 1.0, 0.5836293261814305)
 (-0.5761808036482412, 1.0, 0.820474223049743)
 (-0.23267298160795075, 1.0, 0.8298661321463285)
 (-0.18889649189051147, 1.0, 0.9110602224365536)
 (0.18889649189051147, 1.0, 0.9110602224365536)
 (0.23267298160795, 1.0, 0.8298661321463285)
 (0.5761808036482412, 1.0, 0.820474223049743)
 (0.7162290640736885, 1.0, 0.5836293261814305)
 (1.0, 1.0, 0.5403023058595977)
};
\addlegendentry{Lower Bounds}
\addplot3[
    surf,  
    shader=flat,
    color=black,
    domain=-1:1, 
    domain y=-1:1, 
    samples=50, 
    opacity=0.5
] 
{y*cos(deg(x))};
\addlegendentry{$f(\mathbf{x}) = \mathbf{x}_4\cos(\mathbf{x}_3)$}

Surface defined by specific points
\addplot3[
    surf, mesh/rows=2,
    color=blue,
    opacity = 0.5
] 
coordinates {
(-1.0, -1.0, -0.5403023058595959)
 (-0.7162290640736887, -1.0, -0.62272092395526)
 (-0.5761808036482412, -1.0, -0.8249930561409187)
 (-0.23267298160795075, -1.0, -0.8638929057152325)
 (-0.18889649189051147, -1.0, -0.9288481779474296)
 (0.18889649189051147, -1.0, -0.9288481779474296)
 (0.23267298160795, -1.0, -0.8638929057152325)
 (0.5761808036482412, -1.0, -0.8249930561409187)
 (0.7162290640736885, -1.0, -0.62272092395526)
 (1.0, -1.0, -0.5403023058595959)
 (-1.0, 1.0, 0.5403023058766863)
 (-0.7162290640736887, 1.0, 0.9354537061638268)
 (-0.5761808036482412, 1.0, 0.8611437208880623)
 (-0.23267298160795075, 1.0, 1.1361070942847675)
 (-0.18889649189051147, 1.0, 1.0711518220525704)
 (0.18889649189051147, 1.0, 1.0711518220525704)
 (0.23267298160795, 1.0, 1.1361070942847675)
 (0.5761808036482412, 1.0, 0.8611437208880623)
 (0.7162290640736885, 1.0, 0.9354537061638268)
 (1.0, 1.0, 0.5403023058766863)
};
\addlegendentry{Upper Bounds}
\end{axis}
\end{tikzpicture}
}
    \caption{\footnotesize Overview of bounding algorithm for example problem. We first bound the univariate functions $ \cos(\mathbf{x}_3) $ and $ \mathbf{x}_4 $ , yielding piecewise-linear lower and upper bounds.  We then compose these enclosures using multiplication to obtain the polyhedral enclosure shown on the right.
    }
    \label{fig:bounds}
\end{figure*}
\begin{definition}[Linear Composition of Bounding Sets]
\label{def:lin_comp}
Let $\vset^f = \langle n, \pset, L^f, U^f \rangle$ and $\vset^g = \langle n, \pset, L^g, U^g \rangle$ be bounding sets, and let $\bowtie\ \in\{+,-\}$. We define $\vset^f \bowtie \vset^g := \langle n, \pset, L^{\fg}, U^{\fg} \rangle$, where, for all $\vec{x}\in\pset$,
    \begin{align*}
    &L^{\fg}(\vec{x}), U^{\fg}(\vec{x}) = \\
    &\begin{cases}
        L^f(\vec{x}) + L^g(\vec{x}), U^f(\vec{x}) + U^g(\vec{x}) & \text{if } \bowtie \, = \,+, \\[0.1ex]
        L^f(\vec{x}) - U^g(\vec{x}), U^f(\vec{x}) - L^g(\vec{x}) & \text{if } \bowtie \, = \,-, \\[0.1ex]
    \end{cases}
    \end{align*}
\end{definition}
\begin{theorem}
Let $\vset^f = \langle n, \pset, L^f, U^f \rangle$ and $\vset^g = \langle n, \pset, L^g, U^g \rangle$, and suppose that $\vset^f$ encloses $f$ and $\vset^g$ encloses $g$.  Then, for $\bowtie\ \in\{+,-\}$, $\vset^f \bowtie \vset^g$ encloses $f \bowtie g$.
\end{theorem}
\begin{proof}
Let $\vset^{\fg}=\vset^f\bowtie\vset^g$, let $x\in\dom(\vset^{\fg})$, and let $\Delta$ be a Delaunay triangulation of $\pset$.  Let $\simp=\simp_{\Delta}(\state)$.  We must show that $(\state, f(\vec{x})\bowtie g(\vec{x})) \in \polyenc(\vset^{\fg}_S, \Delta)$.

Let $V=\svert(\simp)$ and $\vec{\theta} = \theta_{\Delta}(\state)$.  Let $L^{\fg}_{\vec{x}} = \vec{\theta}\cdot L^{\fg}(V)$ and $U^{\fg}_{\vec{x}} = \vec{\theta}\cdot U^{\fg}(V)$.  Showing  $(\state, f(\vec{x})\bowtie g(\vec{x})) \in \polyenc(\vset^{\fg}_S, \Delta)$ reduces to showing $L^{\fg}_{\vec{x}} \le f(\vec{x}) \bowtie g(\vec{x}) \le U^{\fg}_{\vec{x}}$.

Let $L^f_{\vec{x}}=\vec{\theta}\cdot L^f(V)$, $U^f_{\vec{x}}=\vec{\theta}\cdot U^f(V)$, $L^g_{\vec{x}}=\vec{\theta}\cdot L^g(V)$, and $U^g_{\vec{x}}=\vec{\theta}\cdot U^g(V)$.  We know that $(\vec{x},f(\vec{x}))\in\poly(\vset^f_{\simp})$ and $(\vec{x},g(\vec{x}))\in\poly(\vset^g_{\simp})$, so we have $L^f_{\vec{x}} \le f(\vec{x}) \le U^f_{\vec{x}}$ and $L^g_{\vec{x}} \le g(\vec{x}) \le U^g_{\vec{x}}$.

Now, if $\bowtie\ = +$, we have $L^{\fg}_{\vec{x}} = \vec{\theta} \cdot L^{\fg}(V) = \vec{\theta} \cdot (L^f(V) + L^g(V)) = \vec{\theta} \cdot L^f(V) + \vec{\theta} \cdot L^g(V) = L^f_{\vec{x}} + L^g_{\vec{x}} \le f(\vec{x}) + g(\vec{x})$.  We can use a similar argument to show that $f(\state)+ g(\state) \le U^{\fg}_{\vec{x}}$.

If $\bowtie\ = -$, we have $L^{\fg}_{\state} = \vec{\theta} \cdot L^{\fg}(V) = \vec{\theta} \cdot (L^f(V) - U^g(V)) = \vec{\theta} \cdot L^f(V) - \vec{\theta} \cdot U^g(V) = L^f_{\state} - U^g_{\state} \le f(\state) - g(\state)$.   We can use a similar argument to show that $f(\state)- g(x) \le U^{\fg}_{\state}$.
\end{proof}

The situation is more complex for nonlinear operations.
We must define a notion of composition that preserves the enclosure property while retaining the (piecewise) linear structure of bounding sets.   Intuitively, we relax the bounds on each grid cell to obtain constant bounds, then perform the nonlinear operation  using interval arithmetic, and then take the weakest bound at each grid cell boundary.

\begin{definition}[Nonlinear Composition of Bounding Sets]
    Let $\vset^f = \langle n, \pset, L^f, U^f \rangle$ and $\vset^g = \langle n, \pset, L^g, U^g \rangle$ be bounding sets, and let $\bowtie\ \in\{\times,\div\}$.  Furthermore, if $\bowtie \,=\div$, assume that $0 \notin [L^g(\state),U^g(\state)]$ for every $\state\in \pset$. We define the nonlinear composition $\vset^f \bowtie \vset^g$ as $\langle n, \pset, L^{\fg}, U^{\fg} \rangle$, where $L^{\fg}$ and $U^{\fg}$ are defined as follows. First, for each grid cell $\State$ of $\pset$, define:
    \begin{align*}
        L^f_\State, U^f_\State & =\min(L^f(\State)), \max(U^f(\State))\\
L^g_\State,U^g_\State & =\min(L^g(\State)), 
        \max(U^g(\State))
   \end{align*}
   We next use interval arithmetic to define lower and upper bounds for the composition on each grid cell:
   \begin{align*}
        \mathcal{H}_{\bowtie}(\State) & = \{h_1 \bowtie h_2 \,|\, h_1 \in \{L^f_\State, U^f_\State\}, h_2 \in
        \{L^g_\State, U^g_\State\} \}\\
        L_{\bowtie}(\State) & = \min(\mathcal{H}_{\bowtie}(\State)),\ U_{\bowtie} =(\State)\max(\mathcal{H}_{\bowtie}(\State)).
    \end{align*}
Finally, for $\state\in\pset$, define:
\begin{align*}
    L^{\fg}(\state) & = \min(\{L_{\bowtie}(\State)\, \mid\,\State\in\cells(\pset,\state)\}),\\
    U^{\fg}(\state) & = \max(\{U_{\bowtie}(\State)\, \mid\,  \State\in\cells(\pset,\state)\}).
\end{align*}
\label{def:nln_comp}
\end{definition}
\begin{theorem}
$\!$Let $\vset^f\!\! =\! \langle n, \pset, L^f\!\!, U^f \rangle$ and $\vset^g\! =\! \langle n, \pset, L^g\!\!, U^g \rangle$, and suppose that $\vset^f$ encloses $f$ and $\vset^g$ encloses $g$.  Then, $\vset^f \times \vset^g$ encloses $f \times g$.  Furthermore, if for every Delaunay triangulation $\Delta$ of $\pset$, $\polyenc(\vset^g,\Delta) \cap (z(\state)=0) = \emptyset$ (i.e., the enclosure for $g$ does not intersect the plane corresponding to the constant zero function), then $\vset^f \div \vset^g$ encloses $f \div g$.
\end{theorem}
\begin{proof}
    For $\bowtie\ \in\{\times,\div\}$, let $\vset^{\fg}=\vset^f\bowtie\vset^g$.  Also, let $x\in\dom(\vset^{\fg})$, and let $\Delta$ be a Delaunay triangulation of $\pset$.  Let $\simp=\simp_{\Delta}(\state)$.  We must show that $(\state, f(\vec{x})\bowtie g(\vec{x})) \in \polyenc(\vset^{\fg}_S, \Delta)$. 

    Let $V=\svert(\simp)$ and $\vec{\theta} = \theta_{\Delta}(\state)$.  Let $L^{\fg}_{\vec{x}} = \vec{\theta}\cdot L^{\fg}(V)$ and $U^{\fg}_{\vec{x}} = \vec{\theta}\cdot U^{\fg}(V)$.  Showing  $(\state, f(\vec{x})\bowtie g(\vec{x})) \in \polyenc(\vset^{\fg}_S, \Delta)$ reduces to showing $L^{\fg}_{\vec{x}} \le f(\vec{x}) \bowtie g(\vec{x}) \le U^{\fg}_{\vec{x}}$.

    Let $L^f_{\vec{x}}=\vec{\theta}\cdot L^f(V)$, $U^f_{\vec{x}}=\vec{\theta}\cdot U^f(V)$, $L^g_{\vec{x}}=\vec{\theta}\cdot L^g(V)$, and $U^g_{\vec{x}}=\vec{\theta}\cdot U^g(V)$.  We know that $(\vec{x},f(\vec{x}))\in\poly(\vset^f_{\simp})$ and $(\vec{x},g(\vec{x}))\in\poly(\vset^g_{\simp})$, so we have $L^f_{\vec{x}} \le f(\vec{x}) \le U^f_{\vec{x}}$ and $L^g_{\vec{x}} \le g(\vec{x}) \le U^g_{\vec{x}}$.

Now, we know by Lemma~\ref{lem:gridsimp} that $V\subseteq\State$ for some grid cell $\State$.  Thus, since $L^f_{\state} = \vec{\theta}\cdot L^f(V)$, and $L^f(\vec{v}) \geq L^f_{\State}$ for each $\vec{v}\in V$, it follows that $L^f_{\state} \geq L^f_{\State}$, and so $L^f_{\State} \leq f(\state)$.  A similar argument shows that $f(\state)\leq U^f_{\State}$.  And the same analysis for $g$ yields $L^g_\State \leq g(\state) \leq U^g_\State$.  It then follows from interval arithmetic that $L_{\bowtie}(\State) \leq f(\state)\bowtie g(\state) \leq U_{\bowtie}(\State)$.

Finally, since $L^{\fg}_{\state} = \vec{\theta}\cdot L^{\fg}(V)$ and $L^{\fg}(\vec{v}) \leq L_{\bowtie}(\State)$ for each $\vec{v}\in V$, it follows that $L^{\fg}_{\state} \leq L_{\bowtie}(\State)$.  By similar reasoning, $U_{\bowtie}(\State) \leq U^{\fg}_{\state}$.  It follows that
$L^{\fg}_{\vec{x}} \le f(\vec{x}) \bowtie g(\vec{x}) \le U^{\fg}_{\vec{x}}$. 
\end{proof}
\Cref{fig:bounds} illustrates the effect of applying these operations to our running example.

\section{The OvertPoly Algorithm}
\label{sec:alg}
In this section, we present \emph{OvertPoly}, an algorithm for verifying nonlinear neural feedback systems using polyhedral enclosures.
Given a discrete-time neural feedback system \dynsys, we compute bounding sets enclosing each transition function $f_i \in \trans$. Using these bounding sets, we
compute an overapproximation $\hat{\traj}^\dynsys(I)$ of the true system trajectory $\traj^\dynsys(I)$ and verify safety by checking that $\hat{\traj}^\dynsys(I)$ satisfies 
a specified reach-avoid property.

To compute the overapproximation $\hat{\traj}^\dynsys(I)$, we encode the bounding set for the transition function $f_i$ as a mixed-integer linear program (MILP). We also represent the neural network controller $\ctrl$ as a MILP. We optimize the resulting MILPs to overapproximate $\nextF^\dynsys(\state_i)$. We repeat this process for each transition function $f_i \in  \trans^\dynsys$, and each time step $t \in [0..T]$ to compute $\hat{\traj}^\dynsys(I)$.
In the following sections, we provide details about each step of the process.
\cref{ssec:bounds} provides details about how the theory of polyhedral enclosures is used to compute bounds for nonlinear functions. \cref{ssec:dyn} describes an efficient combinatorial representation for polyhedral enclosures, and \cref{ssec: reach} describes the procedure for computing $\nextF^\dynsys(\state)_i$. 

\subsection{Computing Bounding Sets for Nonlinear Dynamics}
\label{ssec:bounds}
\begin{algorithm}
\caption{\bound($f$, $G$)}
\begin{algorithmic}[1]
\REQUIRE function $f : \reals^n \to \reals$, $n$-dimensional grid $G$
\ENSURE A bounding set $\vset$ over $G$ that encloses $f$

\IF{$f$ is a constant function}
    \RETURN $\langle n, G, f^G, f^G \rangle$
\ENDIF

\IF{$f$ is univariate}
    \RETURN $\overt(f, G)$ \COMMENT{univariate enclosure}
\ENDIF

\STATE Let $f_1 \bowtie f_2 = f$

\IF{$\bowtie\ \in \{+, -, \times, \div\}$}
   \STATE $\vset \gets \compose(\bowtie, \bound(f_1, G), \bound(f_2, G))$
   \RETURN $\vset$
\ELSE
   \RETURN error \COMMENT{Unsupported operator}
\ENDIF
\end{algorithmic}
\label{alg:bound}
\end{algorithm}

Algorithm~\ref{alg:bound} provides an overview of the process for computing bounding sets. The inputs to this algorithm are a function $ f $ and a grid $G$ (initially one whose domain is $\init$).
The output is a bounding set $ \vset $ enclosing $f$.
\Cref{alg:bound} computes bounding sets by recursively visiting the components of $f$.  For constant functions, an exact bounding set can be constructed from the function itself.  For univariate functions, we use the method based on OVERT described earlier.  In the general case, functions that consist of an arithmetic operator applied to two functions with enclosing bounding sets can be bounded by applying the bounding set composition technique described above.

\subsection{Generating an Efficient Optimization Model}\label{ssec:dyn}
Overapproximating $\nextF^\dynsys$ requires an efficient combinatorial representation for polyhedral enclosures. Let $\vset=\langle n,\pset,L,U\rangle$ be a bounding set. We employ the aggregated convex combination method ~\citep{geissler2011using}. Let $\Delta$ be a Delaunay triangulation of $\pset$, with $\{S_1,\dots,S_m\}$ denoting the $n$-simplices in $\Delta$. We would like to represent an arbitrary $\state \in \dom(\pset)$ as the convex combination of the vertices of some simplex $S_i$.

We define a binary vector $\vec{b}$ of size $m$ to encode the \emph{active} simplex in $\Delta$. In other words, $\state \in S_i$ iff $b_i = 1$ for every $i \in [m]$. Let $\pset = \{\vec{p}_1,\dots,\vec{p}_d\}$.  We define a convex combination variable $\vec{\lambda}$ that ranges over all points in $\pset$ (i.e., $\vec{\lambda} \in \reals^d$). We can then define an integer program $\model$ for $\vset$.

\begin{subequations}
\begin{gather}
    \sum_{j = 1}^{d} \vec{\lambda}_j = 1, \;\; \vec{\lambda} \vgeq \vec{0}, \;\;\; \sum_{i = 1}^m \vec{b}_i \leq 1,  \;\;\; \vec{b} \in \{0,1\}^m \label{eq:cc1}\\
    \vec{\lambda}_j \leq \sum_{\{i|\vec{p}_j \in \svert(S_i)\}} \vec{b}_i \;\;\; \text{for} \;\; j \in [d] \label{eq:cc2} \\
    \state = \sum_{j = 1}^{d} \vec{\lambda}_j \cdot\vec{p}_j \label{eq:cc3}\\
    \overline{y} = \sum_{j = 1}^{d} \vec{\lambda}_j \cdot U(\vec{p}_j), \;\;\;
    \underline{y} = \sum_{j = 1}^{d} \vec{\lambda}_j \cdot L(\vec{p}_j) \label{eq:cc4}\\ 
    \underline{y} \leq y \leq \overline{y} \label{eq:cc5}
\end{gather}
\label{eq:convexComb}
\end{subequations}

\Cref{eq:cc1} defines the auxiliary variables $\vec{\lambda} $ and $ \vec{b} $ and enforces that only one binary variable can be active at a time. \Cref{eq:cc2} defines the so-called SOS-2 constraint~\citep{beale1976global}, which requires that only points from simplices adjacent to the active simplex can have nonzero convex combination weights. \Cref{eq:cc3} then defines the state variable $\vec{x}$.  To complete the program, \Cref{eq:cc4} defines the polyhedral enclosure induced by $\vset$ and $\Delta$, and \Cref{eq:cc5} states that the output variable $y$ must be inside the enclosure. 

We construct a separate mixed integer program for each transition function $ f_i $ in $\trans$. This yields a set of programs $ \{ \mathcal{M}_1, \dots, \mathcal{M}_n\} $ with variables $\{y_1,\dots,y_n\}$ representing the outputs of the transition functions. 
\Cref{fig:tri} illustrates these constraints for the Unicycle example.
\begin{figure}[ht!]
    \centering
    {\begin{tikzpicture}[
    every node/.style={font=\small},
    >=stealth,
    gridpt/.style={circle, draw, inner sep=0pt, minimum size=6pt, fill=white, thin},
    gridptfill/.style={circle, draw, inner sep=0pt, minimum size=6pt, fill=black!70, thin},
]


\def\dx{1.6}  
\def\dy{1.6}  




\fill[black!6]  (0*\dx, 0*\dy) -- (1*\dx, 0*\dy) -- (0*\dx, 1*\dy) -- cycle;
\fill[black!12] (1*\dx, 0*\dy) -- (1*\dx, 1*\dy) -- (0*\dx, 1*\dy) -- cycle;

\fill[black!6]  (1*\dx, 0*\dy) -- (2*\dx, 0*\dy) -- (1*\dx, 1*\dy) -- cycle;
\fill[black!12] (2*\dx, 0*\dy) -- (2*\dx, 1*\dy) -- (1*\dx, 1*\dy) -- cycle;

\fill[black!6]  (2*\dx, 0*\dy) -- (3*\dx, 0*\dy) -- (2*\dx, 1*\dy) -- cycle;
\fill[black!12] (3*\dx, 0*\dy) -- (3*\dx, 1*\dy) -- (2*\dx, 1*\dy) -- cycle;

\fill[black!6]  (3*\dx, 0*\dy) -- (4*\dx, 0*\dy) -- (3*\dx, 1*\dy) -- cycle;
\fill[black!12] (4*\dx, 0*\dy) -- (4*\dx, 1*\dy) -- (3*\dx, 1*\dy) -- cycle;

\draw[thin, black!40] (0,0) -- (4*\dx, 0);
\draw[thin, black!40] (0, \dy) -- (4*\dx, \dy);
\foreach \c in {0,...,4} {
    \draw[thin, black!40] (\c*\dx, 0) -- (\c*\dx, \dy);
}
\foreach \c in {0,...,3} {
    \draw[thin, black!40] (\c*\dx, \dy) -- ({(\c+1)*\dx}, 0);
}

\draw[thick, black] (1*\dx, 0*\dy) -- (2*\dx, 0*\dy) -- (1*\dx, 1*\dy) -- cycle;

\foreach \c/\idx in {0/1, 1/2, 2/3, 3/4, 4/5} {
    \node[gridpt] (L\idx) at (\c*\dx, 0) {};
    \node[below=4pt, font=\footnotesize] at (L\idx) {$\lambda_{\idx}$};
}
\foreach \c/\idx in {0/6, 1/7, 2/8, 3/9, 4/10} {
    \node[gridpt] (L\idx) at ({(\c)*\dx}, \dy) {};
    \node[above=4pt, font=\footnotesize] at (L\idx) {$\lambda_{\idx}$};
}

\node[gridptfill] at (1*\dx, 0) {};
\node[gridptfill] at (2*\dx, 0) {};
\node[gridptfill] at (1*\dx, \dy) {};

\node[font=\footnotesize, text=black!50] at (1.3*\dx, 0.45*\dy) {$S_k$};

\draw[->, thin, black!50] (1.3*\dx, 0.25*\dy) -- (1.3*\dx, -0.6*\dy);
\node[font=\footnotesize, text=black!50, below] at (1.3*\dx, -0.65*\dy)
    {$b_k = 1$};

\node[anchor=west, font=\footnotesize, text=black!60, align=left] at (4*\dx + 0.8, 0.85*\dy) {%
$\displaystyle\sum_{j=1}^{d} \lambda_j = 1,\quad \lambda \succeq 0$};

\node[anchor=west, font=\footnotesize, text=black!60, align=left] at (4*\dx + 0.8, 0.35*\dy) {%
$\displaystyle\lambda_j \leq \!\!\sum_{\{i\,|\,p_j \in \mathrm{vert}(S_i)\}}\!\! b_i$};

\node[anchor=west, font=\footnotesize, text=black!60, align=left] at (4*\dx + 0.8, -0.15*\dy) {%
$\displaystyle\sum_{i=1}^{m} b_i \leq 1,\quad b \in \{0,1\}^m$};

\node[gridptfill, minimum size=5pt] (leg1) at (4*\dx + 1.0, -0.65*\dy) {};
\node[right=2pt, font=\scriptsize] at (leg1) {$\lambda_j > 0$ permitted};
\node[gridpt, minimum size=5pt] (leg2) at (4*\dx + 1.0, -0.95*\dy) {};
\node[right=2pt, font=\scriptsize] at (leg2) {$\lambda_j = 0$ enforced};

\node[font=\footnotesize, text=black!50] at (4.3*\dx, -0.35*\dy) {$x_3$};
\node[font=\footnotesize, text=black!50, rotate=90] at (-0.35*\dx, 0.5*\dy) {$x_4$};

\end{tikzpicture}}
    \caption{\footnotesize A visualization of Equations \ref{eq:cc1} -- \ref{eq:cc2} for a lower dimensional projection of our running example. \Cref{eq:cc3} is implicitly defined here as well because $ \vec{x} $ is a sample from the convex hull of these points.}
    \label{fig:tri}
\end{figure}
We also construct an integer program $\model_0$ to represent $\ctrl$, the neural network controller. The construction uses standard neural network encoding techniques, as described in
\Cref{sec:MipVerify}.

\subsection{Computing forward reachable sets}
\modded{Modified figures for encoding and dependency graph}
\label{ssec: reach}
To compute $\traj^\dynsys(\State_0)_T$ (i.e. the system state at time step $T$ starting from initial state $\State_0$), we must compute $\State_{t+1} =  \nextF^\dynsys(\State_{t})$ for every $t \in [0..T-1]$. This process is called \emph{forward reachability analysis} \citep{bansal2017hamilton}. Since we deal with nonlinear systems, an exact computation of $\nextState$ is intractable. 
Instead, we focus on computing an overapproximation $\hnextState$ such that $\nextState \subseteq \hnextState$.
To do this, assume $\hat{\State}_t$ is an overapproximation of the set of states at time $t$.  For each $i\in[n]$, we solve the following integer program.
\begin{subequations}
\begin{align}
    \min_{\state \in \hat{\State}_t} \state_i + (y_i + \ctrl(\state)_i + \epsilon) \cdot \delta \\
    \quad \model_0, \dots, \model_n,\\
    \quad\epsilon \in \noise
    \label{eq:LowerBound}
\end{align}
\label{eq:Reach}
\end{subequations}
\noindent
For each $i\in [n]$, solving the problem in \cref{eq:Reach} yields a lower bound $l_i$ for $\state_i$ in the next time step.  Similarly, replacing $\min$ with $\max$ and solving yields an upper bound $u_i$.  We can then define $\hnextState = [l_1,u_1] \times \ldots \times [l_n,u_n]$. 
Constructing $\hat{\traj}^\dynsys(I)$ sequentially by repeating this process for each $t \in [0..T-1]$ is called \emph{concrete} reachability analysis \citep{sidrane2022overt}.

This approach typically introduces more uncertainty each time we solve \cref{eq:Reach}. 
Therefore, sequential application of \cref{eq:Reach} may lead to a coarse overapproximation of the reachable sets, a situation sometimes referred to as \emph{excess conservatism}. To address this, we can instead represent two steps at once.  For $i\in[0..m]$, let $\model'_i$ be the same as $\model_i$, except that each variable $v$ appearing in $\model_i$ is replaced by $v'$ in $\model'_i$.  To solve the two-step problem, we then solve:
\begin{subequations}
\begin{align}
    \min_{\state \in \hat{\State}_t} \state'_i + (y'_i + \ctrl'(\state')_i + \epsilon') \cdot \delta \\
    \quad \model_0, \dots, \model_n\\
    \text{for each }i\in[m]
\begin{cases}
\state'_i = \state_i + (y_i + \ctrl(\state)_i + \epsilon_i),\\ \epsilon_i\in\noise
\end{cases}\\
    \quad \model'_0, \dots, \model'_n\\
    \quad \epsilon' \in \noise
    \label{eq:SymLowerBound}
\end{align}
\label{eq:SymReach}
\end{subequations}
\noindent
This yields lower bounds for $\state_{t+2}$ (as before, replacing $\min$ by $\max$ produces upper bounds).
This process can be repeated to generalize from a two-step analysis to a $k$-step analysis for arbitrary $k$. Constructing $\hat{\traj}^\dynsys(I)$ this way is called \emph{symbolic} reachability analysis \citep{sidrane2022overt}.

\begin{figure}[htpb]
    \center
    \resizebox{\textwidth}{!}{\begin{tikzpicture}[
    >=stealth,
    every node/.style={font=\small},
    statenode/.style={
        draw, rounded corners=2pt, minimum width=1.2cm, minimum height=0.65cm,
        fill=black!6, thin, align=center, font=\scriptsize
    },
    ctrlnode/.style={
        draw, rounded corners=2pt, minimum width=1.2cm, minimum height=0.65cm,
        fill=black!15, thin, align=center, font=\scriptsize
    },
    depedge/.style={thin, black!40},
    sharedge/.style={thin, black!25, dashed},
    tempedge/.style={thin, black!60, -{Stealth[length=3pt]}},
    timestep/.style={draw=black!20, rounded corners=6pt, inner xsep=8pt, inner ysep=14pt, fill=black!3},
]

\def\tgap{7.0}   
\def\vgap{1.8}   
\def\hgap{1.45}   

\begin{scope}[shift={(0,0)}]
    \node[statenode] (M11) at (0, 0) {$M^1_1$\\[-1pt]$x_4\cos x_3$};
    \node[statenode] (M12) at (\hgap, 0) {$M^1_2$\\[-1pt]$x_4\sin x_3$};
    \node[statenode] (M13) at (2*\hgap, 0) {$M^1_3$\\[-1pt]$u_2$};
    \node[statenode] (M14) at (3*\hgap, 0) {$M^1_4$\\[-1pt]$u_1\!+\!w$};
    \node[ctrlnode] (M10) at (1.5*\hgap, \vgap) {$M^1_0$};
    \draw[depedge, ->] (M11) -- (M10);
    \draw[depedge, ->] (M12) -- (M10);
    \draw[depedge, <->] (M10) -- (M13);
    \draw[depedge, <->] (M10) -- (M14);
    \draw[thin, black!40, ->, bend left=30] (M13.south) to (M11.south);
    \draw[thin, black!40, ->, bend left=25] (M14.south) to (M11.south);
    \draw[thin, black!40, ->, bend left=25] (M13.south) to (M12.south);
    \draw[thin, black!40, ->, bend left=30] (M14.south) to (M12.south);
    \begin{pgfonlayer}{background}
        \node[timestep, fit=(M11)(M14)(M10), label={[font=\footnotesize, text=black!50]above:$t=1$}] {};
    \end{pgfonlayer}
\end{scope}

\begin{scope}[shift={(\tgap,0)}]
    \node[statenode] (M21) at (0, 0) {$M^2_1$\\[-1pt]$x_4\cos x_3$};
    \node[statenode] (M22) at (\hgap, 0) {$M^2_2$\\[-1pt]$x_4\sin x_3$};
    \node[statenode] (M23) at (2*\hgap, 0) {$M^2_3$\\[-1pt]$u_2$};
    \node[statenode] (M24) at (3*\hgap, 0) {$M^2_4$\\[-1pt]$u_1\!+\!w$};
    \node[ctrlnode] (M20) at (1.5*\hgap, \vgap) {$M^2_0$};
    \draw[depedge, ->] (M21) -- (M20);
    \draw[depedge, ->] (M22) -- (M20);
    \draw[depedge, <->] (M20) -- (M23);
    \draw[depedge, <->] (M20) -- (M24);
    \draw[thin, black!40, ->, bend left=30] (M23.south) to (M21.south);
    \draw[thin, black!40, ->, bend left=25] (M24.south) to (M21.south);
    \draw[thin, black!40, ->, bend left=25] (M23.south) to (M22.south);
    \draw[thin, black!40, ->, bend left=30] (M24.south) to (M22.south);
    \begin{pgfonlayer}{background}
        \node[timestep, fit=(M21)(M24)(M20), label={[font=\footnotesize, text=black!50]above:$t=2$}] {};
    \end{pgfonlayer}
\end{scope}

\node[font=\large, text=black!40] at (2*\tgap - 1.5, 0.9) {$\cdots$};

\begin{scope}[shift={(2*\tgap,0)}]
    \node[statenode] (MT1) at (0, 0) {$M^t_1$\\[-1pt]$x_4\cos x_3$};
    \node[statenode] (MT2) at (\hgap, 0) {$M^t_2$\\[-1pt]$x_4\sin x_3$};
    \node[statenode] (MT3) at (2*\hgap, 0) {$M^t_3$\\[-1pt]$u_2$};
    \node[statenode] (MT4) at (3*\hgap, 0) {$M^t_4$\\[-1pt]$u_1\!+\!w$};
    \node[ctrlnode] (MT0) at (1.5*\hgap, \vgap) {$M^t_0$};
    \draw[depedge, ->] (MT1) -- (MT0);
    \draw[depedge, ->] (MT2) -- (MT0);
    \draw[depedge, <->] (MT0) -- (MT3);
    \draw[depedge, <->] (MT0) -- (MT4);
    \draw[thin, black!40, ->, bend left=30] (MT3.south) to (MT1.south);
    \draw[thin, black!40, ->, bend left=25] (MT4.south) to (MT1.south);
    \draw[thin, black!40, ->, bend left=25] (MT3.south) to (MT2.south);
    \draw[thin, black!40, ->, bend left=30] (MT4.south) to (MT2.south);
    \begin{pgfonlayer}{background}
        \node[timestep, fit=(MT1)(MT4)(MT0), label={[font=\footnotesize, text=black!50]above:$t=T$}] {};
    \end{pgfonlayer}
\end{scope}

\draw[-{Stealth[length=4pt]}, thick, black!50]
    (3*\hgap + 0.55, 0.9) -- (\tgap - 0.55, 0.9);
\draw[-{Stealth[length=4pt]}, thick, black!50]
    (\tgap + 3*\hgap + 0.55, 0.9) -- (2*\tgap - 1.8, 0.9);
\draw[-{Stealth[length=4pt]}, thick, black!50]
    (2*\tgap - 1.1, 0.9) -- (2*\tgap - 0.55, 0.9);


\end{tikzpicture}}
    \caption{\footnotesize Dependency graph for Unicycle example. The light-grey vertices represent state variables, while the solid grey vertex represents the neural network controller. The arrows on the edges denote dependency direction, with the model at the destination depending on variables updated at the source model. Arrows between bounding boxes denote dependencies across time steps. Note that the state dependencies are identical across time steps}
    \label{fig:SDG}
\end{figure} 

The key challenge with symbolic reachability is scalability.  This can be partly addressed by tracking dependencies with a dependency graph.  The models $\model_0,\dots,\model_m$ are the vertices, and there is an edge between $\model_i$ and $\model_j$
if the output of $\model_i$ depends on the previous output value of $\model_j$ or vice versa.  We use dependency graphs to prune the set of models included in each integer linear program.
\Cref{fig:SDG} illustrates the variable dependency graph induced by the combination of state and temporal dependency in our running example.

\section{Optimizations}
\modded{Added this section to discuss the optimizations we introduced. 6.3 onwards is entirely new}
\label{sec:opt}
In addition to the dependency graph structure, we implement a few additional optimizations. We elaborate on the dependency graph structure and discuss the other optimizations below.
\subsection{Using dependency graphs}
We include the ILP $\model_j$ in the optimization problem for a state variable $\state_i$ if $\model_i$ and $\model_j$ share an edge in the state dependency graph of the problem. 
However, when solving a symbolic reachability problem over $k$ steps, the model at step $k$ ($\model_i^k$) depends not only on its immediate neighbors in the graph, but also on its own history, specifically, on $\model_i^{k-1}, \model_i^{k-2}, \ldots, \model_i^{0}$. This results in the extended dependency structure illustrated in \cref{fig:SDG}.
To manage these dependencies efficiently, we use the Plasmo framework~\cite{Jalving2022}, which prunes irrelevant ILPs based on the structure of the provided graph.

\subsection{Tightening nonlinear composition}
\label{ssec:mccormick}
The nonlinear composition introduced in Definition \ref{def:nln_comp} is sound but may be conservative in practice. In cases where the composition yields a bilinear term of the form $z = x \cdot y$, we can tighten the bounds using the McCormick envelope~\cite{hijazi2019perspective}.
\begin{align*}
z &\geq x^\ell y + y^\ell x - x^\ell y^\ell \\
z &\geq x^u y + y^u x - x^u y^u \\
z &\leq x^\ell y + y^u x - x^\ell y^u \\
z &\leq x^u y + y^\ell x - x^u y^\ell
\end{align*}
\noindent
where $x \in [x^\ell, x^u]$ and $y \in [y^\ell, y^u]$.

\subsection{Tightening pre-activation bounds}
\label{ssec:crown}
The ReLU network encoding described in \cref{sec:MipVerify} is sensitive to the quality of pre-activation bounds used to construct the MILP. The original paper proposed interval arithmetic or LP relaxations for pre-activation bound computation, but used interval arithmetic for computational efficiency. Authors in OVERTVerify~\citep{sidrane2022overt} used a version of MIPVerify that replaced interval arithmetic with the tighter MaxSens~\citep{xiang2018output} encoding. We explore further tightening the pre-activation bounds by using the CROWN solver~\citep{zhang2018efficient}. This approach improves on MaxSens by incorprating a backward pass, further tightening pre-activation bounds for deep networks. 
\subsection{Improving LP relaxations for ReLU networks}
\label{ssec:anderson}
The univariate ReLU encoding defined in \cref{eq:mipverify} is \emph{ideal}, meaning that all vertices of its LP relaxation are integral. In practice, however, one typically encounters multivariate affine functions (e.g., pre-activation layers) $f : [L,U] \to \reals$, where $n \in \nats$, $L,U \in \reals^n$, and $L \vleq x \vleq U$. To encode such functions, we instead employ the \emph{Big-M} formulation.
\begin{subequations}
\begin{gather}
    y \leq f(x) - M^-(f;[L,U]) \cdot (1-z), \quad y \geq f(x), \quad y \leq M^+(f;[L,U])\cdot z, \\ x \in [L,U], \quad y\in \reals_+, \quad z \in \{0,1\}
\end{gather}
\end{subequations}
Unfortunately, the \emph{ideal} property does not extend to the multivariate setting. \citet{anderson2020strong} strengthen this formulation by introducing an ideal encoding for multivariate ReLU functions. We define the following element-wise coefficients:
\begin{gather}
    \hat{L_i} = \begin{cases}
        L_i \quad \text{if } w_i \geq 0 \\
        U_i \quad \text{if } w_i < 0
    \end{cases} \quad \text{and} \quad 
    \hat{U_i} = \begin{cases}
        U_i \quad \text{if } w_i \geq 0 \\
        L_i \quad \text{if } w_i < 0
    \end{cases}
\end{gather}
Then for some function $f(x) = w \cdot x + b$ defined over $[L,U]$, we have the following
\begin{subequations}
\begin{gather}
    y \leq \sum_{i \in I} w_i(x_i - \hat{L}(1-z)) + (b + \sum_{i \notin I}w_i\hat{U}_i) \cdot z \quad \forall I \subseteq 2^{[n]}, \label{eq:relusharp1}\\ y \geq w \cdot x + b, \quad x \in [L,U], \quad y \in \reals_+, \quad z \in \{0,1\}
\end{gather}
\end{subequations}
\Cref{eq:relusharp1} depends on the index set $I$, whose size grows exponentially with the input dimension, rendering this formulation impractical as a MILP encoding. Nonetheless, \citet{anderson2020strong} observe that these results can be leveraged to strengthen the formulation on an as-needed basis.
\subsection{Obtaining a more compact encoding for enclosures}
\label{ssec:dcc}
The convex combination encoding defined in \cref{eq:convexComb} introduces one binary variable for each simplex in the triangulation, which becomes problematic in higher dimensions (e.g., $n > 3$). This can be improved by exploiting the fact that only $\lceil \log_2(m) \rceil$ binary variables are required to encode $m$ distinct states, where each state corresponds to a single active simplex among the $m$ simplices~\citep{geissler2011using}. 
Let $\simp := \{\simp_1,\ldots,\simp_m\}$ denote the set of $n$-simplices in the triangulation $\Delta$, and let $c : \simp \to \{0,1\}^{\lceil \log_2(m) \rceil}$ be an injective encoding function. We then define the following alternative integer program $\mathcal{M}$:
\begin{subequations}
\begin{gather}
    \sum_{i=1}^m \sum_{j=0}^n \lambda_j^{\simp_i} = 1, \quad \vec{\lambda} \vgeq \vec{0}, \quad \vec{b} \in \{0,1\}^{\rceil log_2(m) \rceil} \label{eq:dcc1}\\
    \sum_{i = 1}^m \sum_{j=0}^n c(\simp_i)_k\lambda_j^{\simp_i} \leq b_k \quad \text{for } k = 1,\ldots,\lceil log_2(m) \rceil \label{eq:dcc2}\\
    \sum_{i = 1}^m \sum_{j=0}^n (1 - c(\simp_i)_k)\lambda_j^{\simp_i} \leq 1-b_k \quad \text{for } k = 1,\ldots,\lceil log_2(m) \rceil \label{eq:dcc3}\\
    \state = \sum_{i=1}^m \sum_{j=0}^n \lambda_j^{\simp_i}p_j^{\simp_i} \label{eq:dcc4}\\ 
    \overline{y} = \sum_{i=1}^m \sum_{j=0}^n \lambda_j^{\simp_i}U(p_j^{\simp_i}) \quad \underline{y} = \sum_{i=1}^m \sum_{j=0}^n \lambda_j^{\simp_i}U(p_j^{\simp_i}) \label{eq:dcc5}\\
    \underline{y} \leq y \leq \overline{y} \label{eq:dcc6}
\end{gather} 
\end{subequations}
\Cref{eq:dcc1} introduces auxiliary variables for the \emph{disaggregated} convex combination encoding, in which each simplex $\simp_i$ is associated with $n+1$ distinct convex combination variables. In contrast, \cref{eq:cc1} defines an \emph{aggregated} formulation, where the $\lambda$ variables are shared across simplices. The disaggregated encoding trades a larger number of continuous variables for a reduced number of binary variables, and yields a tighter LP relaxation than the aggregated formulation~\cite{geissler2011using}. 
\Cref{eq:dcc2,eq:dcc3} impose bit-level constraints via the injective function, ensuring that at most one simplex is active at any time. The remaining constraints follow the standard convex combination encoding.

\section{Implementation}
\label{sec:imp}
We exploit the meta-programming features of the Julia programming language for expression parsing and symbolic analysis. We use the Symbolics.jl~\citep{10.1145/3511528.3511535} package for analytical derivatives, IntervalRootfinding.jl~\citep{IntervalRootFinding} for sound root finding, and OVERT.jl~\citep{sidrane2022overt} for univariate bound computation. We combine methods from OVERTVerify.jl and NeuralVerification.jl~\citep{liu2020algorithmsverifyingdeepneural} for our local implementation of MIPVerify. We use the Plasmo framework ~\citep{jalving2019graph,Jalving2022} to solve MILPs on graphs, with Gurobi as a backend solver~\citep{gurobi}.

\section{Evaluation}
\modded{Revamped evals section to contextualize new experiments, more thoroughly discuss results, and discuss ablations/optimizations}
\label{sec:eval}
We begin by comparing our tool against state-of-the-art combinatorial and propagation-based approaches to reachability analysis. We first describe the benchmark set and then present results for an unoptimized version of our algorithm. Next, we examine the limitations of our approach through a more detailed comparison of the combinatorial methods. Finally, we conclude with an ablation study.

\subsection{Benchmarks}
We evaluate all algorithms on a subset of the benchmarks used in the annual ARCH competition~\citep{lopez2023arch}. Since we only support ReLU networks, we restrict ourselves to benchmarks that use ReLU networks. 
We describe pertinent details about some of these benchmarks
below.
\subsubsection{Single Pendulum}
We would like to verify that the angle of an inverted pendulum remains within a specified range during a specified time interval. Formally, we define the neural feedback system $\pend$, where $n^\pend = 2$, $I^\pend = [1.0, 1.2] \times [0.0, 0.2]$, $F^\pend(\vec{x}) = (\vec{x}_2, c_1\vec{x}_1 - c_2\vec{x}_2)$,$E^\pend = \{0\} \times \{0\}$, $u^\pend$ is computed by a neural network with two hidden layers (each with 25 neurons), and two outputs, the first of which is set to the constant zero value, $\delta^\pend = 0.05$, $T^\pend = 20$, and $G^\pend = \emptyset$. Let $R$ be the set defined by $[0,1] \times [-\infty, \infty]$, then $A^\pend$ is the sequence of sets defined by 
\[
    A^\pend(t) = 
    \begin{cases}
        \emptyset, &\text{if} \: t < 10 \\
        R^c &\text{if} \: 10 \leq t \leq 20
    \end{cases}
\]

We would like to verify that the avoid property holds for $\pend$.

\subsubsection{ACC}
In this benchmark, we would like to verify that a neural network controlled vehicle tracks a set velocity while maintaining a safe distance from a second vehicle.
Formally, we define the neural feedback system $\acc$, where $n^\acc = 6$, $I^\acc = [90,110] \times [32,32.2]\times\{ 0\}\times[10,11]\times[30,30.2]\times\{ 0\}$, $F^\acc(\vec{x}) = (\vec{x}_2, \vec{x}_3,-2\vec{x}_3 - c_1\vec{x}_2^2,\vec{x}_5, \vec{x}_6,$ $-2\vec{x}_6- c_1\vec{x}_5^2 )$,$E^\acc = \{0\} \times \{0\}\times \{0\}\times \{0\}$, $u^\acc$ is computed by a neural network with five hidden layers (each with 20 neurons), and six outputs, of which outputs 1,2,4, and 5 are set to the constant zero value, and output 3 is set to the constant -4 value, $\delta^\acc = 0.1$, $T^\acc = 50$, and $G^\acc = \emptyset$.  
Let $R$ be the set of states satisfying $\vec{x}_1 - \vec{x}_4 \geq 10 + 1.4 \cdot\vec{x}_5$
then $A^\acc$ is the sequence of sets defined by 
\[
    A^\acc(t) = 
    \begin{cases}
        R^c, &\text{if} \: 0 \leq t \leq 50 \\
        \emptyset &\text{if} \: t > 50 
    \end{cases}
\]
We would like to verify that the avoid property holds for $\acc$.
\subsubsection{TORA}
We would like to verify that the states of an actuated cart remain within a safe region during a specified time interval. Formally, we define the neural feedback system $\tora$, where $n^\tora = 4$, $I^\tora = [0.6, 0.7] \times [-0.7, -0.6] \times [-0.4,-0.3] \times [0.5, 0.6] $, $F^\tora(\vec{x}) = (\vec{x}_2, -\vec{x}_1+0.1\sin(\vec{x}_3), \vec{x}_4, 0)$,$E^\tora = \{0\} \times \{0\} \times \{0\}\times \{0\}$, $u^\tora$ is computed by a neural network with three hidden layers (each with 100 neurons), and four outputs, the first three of which is set to the constant zero value, $\delta^\tora = 0.1$, $T^\tora = 20$, and $G^\tora = \emptyset$. Let $R$ be the set defined by $[-2,2] \times [-2,2] \times [-2,2] \times [-2,2]$, then $A^\pend$ is the sequence of sets defined by 
\[
    A^\pend(t) = 
    \begin{cases}
        R^c, &\text{if} \: 0 \leq t \leq 20 \\
        \emptyset &\text{if} \: t > 20 
    \end{cases}
\]

We would like to verify that the avoid property holds for $\tora$.
\subsubsection{Unicycle Car Model}
See running example.

\subsection{Comparing to state-of-the-art approaches}
We assess the quality of our contribution by comparing it to a discrete-time implementation of CORA~\citep{kochdumper2023constrained}, the state-of-the-art propagation based tool, and OVERTVerify~\citep{sidrane2022overt}, the state-of-the-art combinatorial tool. 
We use the Polynomial Zonotope abstraction \citep{kochdumper2023constrained} in CORA. We note that the performance of the CORA algorithm depends on the choice of hyperparameters, and we report results for the lowest-precision settings needed to verify a given property.
\subsubsection{Experimental Setup}
Computation times were obtained by running all tools on an AMD Ryzen 9 7950x processor. For the pendulum, ACC, and TORA benchmarks, we used concrete reachability and~\Cref{alg:bound} as shown.
For the Unicycle benchmark, which is more challenging, we used five iterations of symbolic reachability with $k=10$ (this technique is called \emph{hybrid-symbolic reachability} in~\citep{sidrane2022overt}) for both OvertPoly and OVERTVerify.  For OvertPoly, we also replaced the nonlinear composition described above with a more precise method specialized for low dimensions based on McCormick envelopes \citep{hijazi2019perspective}.

\subsubsection{Baseline Results}
We define our baseline algorithm as one that employs the aggregated convex combination encoding, dependency graphs, and McCormick envelopes. To ensure a fair comparison, we exclude additional network-level optimizations. Baseline results are reported in \Cref{tab:results}. 
For each benchmark, we report the time required to either verify or falsify the specified property, along with the volume of the reachable set at time $\horizon$. Since CORA does not support volume computation for polynomial zonotopes, we approximate the volume via grid-based sampling. 
Despite employing a combinatorial approach, the resulting computation times from our method are comparable to those of state-of-the-art propagation-based methods.
\begin{table*}[ht!]
\centering
\resizebox{0.9\columnwidth}{!}{
    \begin{tabular}{lcccccc}
        \toprule
        & \multicolumn{2}{c}{OvertPoly (\ding{72})} & \multicolumn{2}{c}{OVERTVerify} & \multicolumn{2}{c}{CORA} \\ 
        \cmidrule(lr){2-3} \cmidrule(lr){4-5} \cmidrule(lr){6-7}
        & Time (s) & Volume & Time (s) & Volume &  Time (s) & Volume \\ 
        \midrule
        Pend. &  $ 1.1218$ & $ 5.269\textsc{e}{-2} $ & $ 7.325\textsc{e}{-1} $ & $ \mathbf{5.243\textsc{e}{-2}} $ & $ \mathbf{7.1037\textsc{e}{-1}}$ & $ 1.8361 $ \\
        ACC &  $ 12.2757 $ & $ 1.337\textsc{e}{-2} $ & $ 110.913 $ & $ \mathbf{1.317\textsc{e}{-2}} $ & $ \mathbf{4.7796} $ & $ 6.8416\textsc{e}{+5} $ \\
        TORA & $ \mathbf{561.577591} $ & $ 6.656\textsc{e}{-1}$ & $ 983.758 $ & $ \mathbf{6.434\textsc{e}{-1}} $ & $ \times $ & $ 6.0325\textsc{e}+02 $ \\
        Unicycle & $ \mathbf{3940.6599} $ & $ 1.555\textsc{e}{-5} $ & $ 16348.3840 $ & $ \mathbf{9.7765\textsc{e}{-6}} $ & $ \times $ & $ \times $ \\
        \bottomrule
    \end{tabular}}
    \caption{\footnotesize Benchmark computation time (s) and set volumes, the black star (\ding{72}) denotes our approach. Computation times are listed for verified instances, and $ \times $ indicates an unverified instance. All available set volumes are shown. Best performance is highlighted in bold.}
    \label{tab:results}
\end{table*}

The Single Pendulum benchmark serves as a baseline across all tools. All methods complete the analysis in under two seconds, with CORA achieving the fastest runtime and OVERTVerify attaining the highest precision. This behavior is consistent with the design trade-offs of each method: CORA prioritizes computational efficiency at the expense of precision (producing a set approximately $35\times$ larger than the compact set obtained by OVERTVerify). OvertPoly balances these approaches, yielding a reachable set with volume approximately $0.6\%$ larger. 
The relative simplicity of this benchmark means that runtimes are dominated by bound computation and composition, which diminishes the advantages of OvertPoly’s approach, and results in slightly slower performance compared to OVERTVerify.
The design trade-offs are more pronounced in the ACC benchmark: both OvertPoly and CORA verify the property in under $15$ seconds, each more than $9\times$ faster than OVERTVerify. As expected, OVERTVerify yields the most precise results, while OvertPoly produces a solution that is approximately $2\%$ looser. The ACC specification depends only on a subset of the state variables, allowing CORA to successfully verify the benchmark despite the effects of state explosion.

As the benchmarks increase in complexity, CORA's lack of precision becomes a limiting factor, preventing it from verifying the TORA and Unicycle benchmarks. For the TORA benchmark, the specification depends on all state variables, and the reachable sets computed by CORA are overly conservative. In this setting, OvertPoly is approximately $1.8\times$ faster than CORA, while producing a solution that is approximately $4\%$ looser. 
For the Unicycle benchmark, state explosion causes CORA to fail (via a crash), whereas OvertPoly verifies the specification more than $4\times$ faster than OVERTVerify, producing a solution that is approximately $60\%$ looser.
These results highlight the relative strengths of each tool: CORA’s speed dominates on benchmarks it can successfully verify, whereas OVERTVerify achieves superior precision. OvertPoly trades a modest loss in precision for substantial improvements in computation time.
\subsection{Comparing combinatorial solvers}
The following experiments assess the strengths and limitations of our algorithm compared to OVERTVerify, the state-of-the-art combinatorial solver.
\subsubsection{Evaluating sensitivity to controller parameters}
To control for the peculiarities of the networks used in the ARCH-Competition benchmarks, we train additional networks for the TORA and Unicycle benchmarks. We use behavior cloning from an MPC expert and retain the architectures used in the original benchmark suite. To assess sensitivity to network size, we also train networks with the same architectures but with $0.5\times$ and $2\times$ as many neurons per hidden layer.
For the TORA benchmark, this yields networks with three hidden layers, each containing $k \in \{50,100,200\}$ neurons, and four outputs. For the Unicycle benchmark, this yields networks with one hidden layer containing $k \in \{250,500,1000\}$ neurons, and four outputs. All other properties of the corresponding benchmarks are unchanged. The results are shown in~\cref{tab:ab_ns}.
\begin{table*}[h!]
    \centering
    \resizebox{0.8\columnwidth}{!}{%
    \begin{tabular}{lcccc|cc}
        \toprule
        & \multicolumn{2}{c}{OvertPoly (\ding{72})}
        & \multicolumn{2}{c}{OVERTVerify}
        & \multicolumn{2}{c}{Improvement} \\
        \cmidrule(lr){2-3}
        \cmidrule(lr){4-5}
        \cmidrule(lr){6-7}
        & Time (s) & Vol. 
        & Time (s) & Vol.
        & Time ($\downarrow$) & Vol. ($\downarrow$) \\
        \midrule
        TORA (small)  & $1.97$ & $1.84\textsc{E-03}$ & $2.01$ & $1.76\textsc{E-03}$ & $1.015$ & $-1.043$ \\
        TORA (medium)  & $1.97$ & $1.85\textsc{E-03}$ & $2.16$ & $1.77\textsc{E-03}$ & $1.086$ & $-1.043$ \\
        TORA (large)  & $3.04$ & $1.84\textsc{E-03}$ & $2.91$ & $1.76\textsc{E-03}$ & $-1.048$ & $-1.043$ \\
        \midrule
        Unicycle (sma.)  & $1307.66$ & $4.89\textsc{E+01}$ & $9293.91$ & $3.78\textsc{E+01}$ & $7.107$ & $-1.292$ \\
        Unicycle (med.)  & $1165.37$  & $3.62\textsc{E+01}$ & $4782.088$ & $2.62\textsc{E+01}$ & $4.103$ & $-1.378$ \\
        Unicycle (larg.) & $2952.14$ & $5.43\textsc{E+01}$ & $13926.61$ & $2.32\textsc{E+01}$ & $4.717$ & $-2.337$ \\
        \bottomrule
    \end{tabular}}
    \setlength{\abovecaptionskip}{10pt}
     \caption{\footnotesize Benchmark computation times and set volumes using custom networks, the black star (\ding{72}) denotes our approach. The small network is $0.5\times$ larger than the standard number of neurons, medium is $1\times$, and large is $2\times$. For ease of comparison, we also report relative improvement in computation time and set volume}
    \label{tab:ab_ns}
\end{table*}
The trained networks appear to be better conditioned, as both benchmarks require significantly less time for verification. The previously observed pattern persists: OvertPoly achieves the fastest runtimes, while OVERTVerify yields the tightest bounds.
For the TORA benchmark, the problem becomes trivially solvable, suggesting that network evaluation dominated the solve time in the original benchmark. This also explains the limited separation between OvertPoly and OVERTVerify in that setting, since the baseline OvertPoly algorithm uses an identical network encoding. 
In contrast, the separation between OvertPoly and OVERTVerify is more pronounced on the Unicycle benchmark: OvertPoly is over $5\times$ faster on average, at the cost of solutions that are approximately $67\%$ looser. The dependence on network size is less straightforward; both algorithms exhibit degraded performance on the smallest and largest instances, while performing best on the medium-sized instance.

\subsubsection{Evaluating sensitivity to problem size}
Our approach to nonlinear composition raises concerns regarding both precision and scalability in higher-dimensional settings. To better characterize these limitations, we evaluate the method on a larger benchmark involving higher-dimensional multiplication. Specifically, we consider the \emph{Attitude Control} benchmark from the ARCH competition~\citep{lopez2023arch}, which involves compositions in a six-dimensional state space.

The objective is to verify the safety of a simplified model of an autonomous aircraft. The aircraft is modeled as a rigid body with six state variables: three orientation variables $(\psi_1,\psi_2,\psi_3)$, represented using Rodrigues parameters, and three angular velocity variables $(\omega_1,\omega_2,\omega_3)$. The original benchmark employs sigmoid activation functions, which are not supported by our current implementation. Following the approach described in the previous section, we instead train a ReLU network with the same architecture as in the original benchmark, but with twice the number of neurons per hidden layer.

Symbolic reachability inherently trades tractability for precision, making it a suitable setting for evaluating the scalability of combinatorial algorithms. \Cref{fig:scaling} highlights the scaling behavior of OvertPoly and \textsc{OVERTVerify} under symbolic reachability with $k$ symbolic steps. We track cumulative solve time and reachable set volume for $k \in \{1,\dots,20\}$. To emphasize relative differences between the tools, we use a logarithmic scale on the $y$-axis.
\begin{figure}[htpb]
    \resizebox{0.9\columnwidth}{!}{
\centering

\begin{subfigure}[b]{0.48\textwidth}
\centering
\begin{tikzpicture}
\begin{axis}[
    width=\textwidth,
    height=6.5cm,
    xlabel={Problem depth},
    ylabel={Log. cume. comp. time},
    legend style={at={(0.95,0.15)}, anchor=east, font=\footnotesize},
    grid=major,
    grid style={dashed, gray!30},
    tick label style={font=\footnotesize},
    label style={font=\small},
    xmin=1, xmax=21,
    ymin=2, ymax=6.5,
]

\addplot[color=blue, mark=*, mark size=1.2pt, thick] coordinates {
    (2,  2.656763036)
    (3,  3.844394845)
    (4,  4.436790314)
    (5,  4.731895491)
    (6,  4.982063496)
    (7,  5.144291921)
    (8,  5.262218469)
    (9,  5.354924265)
    (10, 5.431331854)
    (11, 5.496337488)
    (12, 5.552905029)
    (13, 5.602990075)
    (14, 5.647921535)
    (15, 5.688670399)
    (16, 5.725944834)
    (17, 5.760296354)
    (18, 5.792153375)
    (19, 5.821855506)
    (20, 5.849673630)
};
\addlegendentry{OVERTVerify}

\addplot[color=red, mark=triangle*, mark size=1.2pt, thick, dashed] coordinates {
    (2,  3.656742053)
    (3,  4.080980424)
    (4,  4.293654313)
    (5,  4.437450188)
    (6,  4.546753152)
    (7,  4.635132428)
    (8,  4.709583909)
    (9,  4.773933780)
    (10, 4.830582198)
    (11, 4.881368801)
    (12, 4.927590704)
    (13, 4.969818808)
    (14, 5.008917887)
    (15, 5.045209180)
    (16, 5.079204206)
    (17, 5.111095836)
    (18, 5.141407519)
    (19, 5.170101295)
    (20, 5.197429062)
};
\addlegendentry{OvertPoly}

\end{axis}
\end{tikzpicture}
\caption{\footnotesize Cumulative set computation time as a function of symbolic depth}
\label{fig:scaling_time}
\end{subfigure}
\hfill
\begin{subfigure}[b]{0.48\textwidth}
\centering
\begin{tikzpicture}
\begin{axis}[
    width=\textwidth,
    height=6.5cm,
    xlabel={Problem depth},
    ylabel={Log. cume. set volume},
    legend style={at={(0.05,0.85)}, anchor=west, font=\footnotesize},
    grid=major,
    grid style={dashed, gray!30},
    tick label style={font=\footnotesize},
    label style={font=\small},
    xmin=1, xmax=21,
]

\addplot[color=blue, mark=*, mark size=1.2pt, thick] coordinates {
    (2,  -11.04426057)
    (3,  -10.55081479)
    (4,  -10.11895966)
    (5,  -9.679998825)
    (6,  -9.277175429)
    (7,  -8.916970351)
    (8,  -8.598714615)
    (9,  -8.305389835)
    (10, -7.994600787)
    (11, -7.636046489)
    (12, -7.285809091)
    (13, -6.879397894)
    (14, -6.464890129)
    (15, -6.04830531)
    (16, -5.58311661)
    (17, -5.06683497)
    (18, -4.428857036)
    (19, -3.617003171)
    (20, -2.662260187)
};
\addlegendentry{OVERTVerify}

\addplot[color=red, mark=triangle*, mark size=1.2pt, thick, dashed] coordinates {
    (2,  -10.55299533)
    (3,  -9.856997809)
    (4,  -9.268989639)
    (5,  -8.724565803)
    (6,  -8.205576504)
    (7,  -7.703993171)
    (8,  -7.215562559)
    (9,  -6.737209808)
    (10, -6.261939426)
    (11, -5.775522433)
    (12, -5.270298439)
    (13, -4.747158822)
    (14, -4.202832867)
    (15, -3.633547193)
    (16, -3.018596869)
    (17, -2.319847206)
    (18, -1.489782102)
    (19, -0.4516736853)
    (20,  3.558722472)
};
\addlegendentry{OvertPoly}

\end{axis}
\end{tikzpicture}
\caption{\footnotesize Cumulative reach.set volume as a function of symbolic depth}
\label{fig:sub2}
\end{subfigure}}
    \caption{\footnotesize Symbolic reachability comparison between OverPoly and OVERTVerify using the modified Attitude benchmark. The $x$ axis denotes symbolic depth, while the $y$-axis denotes log (in base 10) cumulative computation time/volume}
    \label{fig:scaling}
\end{figure}

The results exhibit a consistent pattern: \textsc{OVERTVerify} achieves higher precision, while OvertPoly attains lower runtimes. The separation between the two methods becomes more pronounced as the symbolic depth increases. At $k = 20$, \textsc{OVERTVerify} produces reachable sets that are several orders of magnitude tighter than those obtained by OvertPoly. 
However, OvertPoly is consistently faster for $k \geq 4$, and this performance gap widens with increasing $k$. In practice, combinatorial solvers are often subject to fixed timeouts; \textsc{OVERTVerify} exceeds a three-hour timeout at $k = 3$, whereas OvertPoly remains below this threshold even at $k = 20$. These results indicate that \textsc{OVERTVerify} is likely to time out under conventional evaluation settings, while OvertPoly returns results more quickly, albeit with degraded precision.

\subsection{Ablations and Optimizations}
In this section, we evaluate the impact of our optimizations. The components included in the baseline algorithm (dependency graphs and McCormick envelopes) are treated as ablations, while the remaining components are evaluated as optimizations.
\subsubsection{Ablation: Dependency graph structure}
To evaluate the effect of the dependency graph structure, we compare our method against a variant that does not use dependency graphs (referred to as the \emph{flat} implementation). We conduct this evaluation on the Pendulum and ACC benchmarks. The results are reported in~\cref{tab:flat_vs_graph}.
\begin{table}[h]
\centering
\begin{tabular}{llr}
\toprule
Benchmark & Formulation & Time (s) \\
\midrule
Pendulum  & Flat  & $2.63$ \\
          & Graph & $1.07$ \\
\midrule
ACC       & Flat  & $186.13$ \\
          & Graph & $12.19$ \\
\bottomrule
\end{tabular}
\caption{\footnotesize Flat vs.\ Graph formulation runtime comparison}
\label{tab:flat_vs_graph}
\end{table}
The results are somewhat surprising: the flat encoding is $2\times$ slower on the Pendulum benchmark and more than $10\times$ slower on the ACC benchmark. This effect is more pronounced for ACC because, although the system is six-dimensional, each update function depends on only $2$--$3$ state variables. The flat implementation does not exploit this sparsity in the dependency structure, resulting in a significantly less efficient encoding.
\subsubsection{Ablation: Tightening nonlinear composition}
Next, we evaluate the effect of the bilinear tightening procedure described in~\cref{ssec:mccormick}. We compare our method against a variant that does not use the tightening procedure (which we refer to as the interval arithmetic implementation). We conduct this evaluation on the Unicycle benchmark with $\horizon = 5$. The results are reported in~\cref{tab:mccormick_ablation}. 
\begin{table}[h]
\centering
\begin{tabular}{lrr}
\toprule
Configuration & Time (s) & Volume \\
\midrule
McCormick    & $23.81$ & $9.76\textsc{E-07}$ \\
Interval Arith & $19.75$ & $2.22\textsc{E-06}$ \\
\bottomrule
\end{tabular}
\caption{\footnotesize McCormick envelope vs.\ interval arithmetic in nonlinear composition on the unicycle benchmark}
\label{tab:mccormick_ablation}
\end{table}
The tightening procedure yields a set volume that is $56\%$ smaller, at the cost of a modest overhead. We expect this effect to increase with $\horizon$, as errors accumulate across time steps.
\subsubsection{Optimization: Tightening pre-activation bounds}
We evaluate the potential benefits of tightening pre-activation bounds using the CROWN solver~\cite{zhang2018efficient}. To this end, we introduce a variant that replaces the MaxSens solver used in the baseline with CROWN. We conduct this evaluation on the TORA and Unicycle benchmarks (with $\horizon = 30$ for Unicycle), and report the results in~\cref{tab:crown_vs_maxsens}.
\begin{table}[h]
\centering
\begin{tabular}{llrrr}
\toprule
Benchmark & Bounds & Time (s) & Volume & ReLUs\\
\midrule
TORA      & MaxSens & $598.6$  & $6.656\textsc{E-01}$   & 126 \\
          & CROWN   & $519.6$  & $6.656\textsc{E-01}$   &  53 \\
\midrule
Unicycle  & MaxSens & $1941.3$ & $2.629\textsc{E-05}$ &   5 \\
          & CROWN   & $1948.6$ & $2.629\textsc{E-05}$ &   5 \\
\bottomrule
\end{tabular}
\caption{\footnotesize CC encoding: MaxSens vs.\ CROWN back-substitution. The ReLUs column denotes the number of active ReLUs in the big-M encoding}
\label{tab:crown_vs_maxsens}
\end{table}
Tightening pre-activation bounds yields moderate improvements in computation time for the TORA benchmark, reducing runtime by $13\%$. In contrast, tighter bounds have negligible impact on the Unicycle benchmark. This behavior is explained by the number of unstable ReLUs eliminated by CROWN: it removes $73$ unstable ReLUs in the TORA benchmark, but does not eliminate any in the Unicycle benchmark. This is consistent with expectations, as the Unicycle controller is relatively shallow. Overall, these results suggest that this optimization is most effective for deeper networks.

\subsubsection{Optimization: Improving LP relaxations for ReLU networks}
We evaluate the potential benefits of including the \emph{strengthening} constraints described in~\cref{ssec:anderson}. 
As we mentioned in that section, the procedure introduces an exponential number of constraints. We lexicographically add the first $N=100$ constraints obtained by the procedure, and report the results in~\cref{tab:anderson_cuts}
\begin{table}[h]
\centering
\begin{tabular}{llr}
\toprule
Benchmark & Condition & Time (s) \\
\midrule
TORA     & Baseline + CROWN, no cuts          & $519.6$ \\
                    & Baseline + CROWN, $N=100$ cuts & $304.81$ \\
\midrule
Unicycle & Baseline          & $1948.6$ \\
                    & Baseline + CROWN, $N=100$ cuts & $2285.45$ \\
\bottomrule
\end{tabular}
\caption{\footnotesize Ablation comparing the effect of strengthening cuts to the baseline. Both approaches use CROWN to tighten pre-activation bounds. Volumes are not reported as they are identical to the ones obtained using CROWN in~\cref{tab:crown_vs_maxsens}}
\label{tab:anderson_cuts}
\end{table}
The effects of the cuts are mixed, as they improve runtimes on TORA by $70\%$, but perform worse on the Unicycle benchmark. These results suggest that the trade off between \emph{strengthening} cuts and constraint overhead deserves further study. We leave this study to future work.
\subsubsection{Optimization: Compact encoding for polyhedral enclosures}
Finally, we evaluate the potential benefits of the compact encoding for polyhedral enclosures discussed in~\cref{ssec:dcc}. We use a lexicographic cell-wise Gray code as our injective function, ensuring that adjacent grid cells only differ in one bit. We present the results in~\cref{tab:dcc_vs_cc} 
\begin{table}[h]
\centering
\begin{tabular}{llr}
\toprule
Benchmark & Condition & Time (s) \\
\midrule
TORA     & Baseline+CROWN         & $519.6$ \\
                    &Baseline+CROWN, dcc & $354.37$ \\
\midrule
Unicycle & Baseline+CROWN          & $1948.6$ \\
                    &Baseline+CROWN, dcc& $1634.31$ \\
\bottomrule
\end{tabular}
\caption{\footnotesize Logarithmic disaggregated convex combination encoding vs aggregated convex combination encoding. Both approaches use CROWN to tighten pre-activation bounds. Volumes are not reported as they are identical to the ones obtained using CROWN in~\cref{tab:crown_vs_maxsens}.
}
\label{tab:dcc_vs_cc}
\end{table}
The compact encoding yield moderate improvements in computation time for both the TORA and Unicycle benchmarks, improving TORA by $46\%$ and Unicycle by $19\%$. The results are not dramatic, suggesting the number of binary variables is not the primary driver of solve times. Nonetheless, these results suggest that this optimization is effective across the board. 

\section{Conclusions}
\label{sec:conclusion}
In this work, we showed that using bounding sets and polyhedral enclosures provides a scalable abstraction for verifying reach-avoid properties of nonlinear neural feedback systems. Polyhedral enclosures (along with neural network controllers) are encoded as mixed integer linear programs, enabling forward reachability analysis of discrete time systems. These techniques are integrated in the OvertPoly algorithm, which shows a significant improvement in both computation time and precision when compared to existing tools. The improved scalability of precise verification tools enables the verification of more realistic neural feedback systems, which is a promising step toward safer autonomous transportation systems.
 
\subsection{Limitations}
\modded{Added this section}
Our approach is limited to systems whose transition functions can be represented as extended rational nonlinear functions. We are also restricted to systems with ReLU activations.
\subsection{Future work}
\modded{Modified this to reflect the work already done}
We would like to extend this abstraction to a larger class of nonlinear transition functions. We would also like to integrate our abstraction algorithm with more capable representations for the neural network controller.
\acks{We would like to acknowledge David Cole for his invaluable feedback on the dependency graph structure, as well as David Dill and other members of the Centaur lab for their feedback on the paper.
This material is based upon work supported by the National Science
Foundation Graduate Research Fellowship Program under Grant No. (2146755).
Any opinions, findings, and conclusions or recommendations expressed in this
material are those of the author(s) and do not necessarily reflect the views
of the National Science Foundation.  Additional support was provided by NSF Grant No. 2211505 and by the Stanford Center for Automated Reasoning.}
\modded{Added acknowledgements as well}

\newpage
\bibliography{main}

\end{document}